\def\>{\rangle}
\def\<{\langle}
\newcommand\numberthis{\addtocounter{equation}{1}\tag{\theequation}}
\DeclareMathOperator*{\Tr}{Tr}
\DeclareMathOperator*{\Id}{\mathds{1}}
\begin{document}

\title{Hamiltonian variational ansatz without barren plateaus}

\author{Chae-Yeun Park}
\author{Nathan Killoran}
\affiliation{Xanadu, Toronto, ON, M5G 2C8, Canada}

%\date{\today}

\newtheorem{proposition}{Proposition}
\newtheorem{example}{Example}
\newtheorem{lemma}{Lemma}
\newtheorem{theorem}{Theorem}
\newtheorem{observation}{Observation}
\newtheorem{corollary}{Corollary}
\newtheorem{conjecture}{Conjecture}

\begin{abstract}
Variational quantum algorithms, which combine highly expressive parameterized quantum circuits (PQCs) and optimization techniques in machine learning, are one of the most promising applications of a near-term quantum computer.
Despite their huge potential, the utility of variational quantum algorithms beyond tens of qubits is still questioned.
One of the central problems is the trainability of PQCs. The cost function landscape of a randomly initialized PQC is often too flat, asking for an exponential amount of quantum resources to find a solution.
This problem, dubbed \textit{barren plateaus}, has gained lots of attention recently, but a general solution is still not available.
In this paper, we solve this problem for the Hamiltonian variational ansatz (HVA), which is widely studied for solving quantum many-body problems.
After showing that a circuit described by a time-evolution operator generated by a local Hamiltonian does not have exponentially small gradients,
we derive parameter conditions for which the HVA is well approximated by such an operator.
Based on this result, we propose an initialization scheme for the variational quantum algorithms and a parameter-constrained ansatz free from barren plateaus.
\end{abstract}

\maketitle

\section{Introduction}

Recent experimental progress in controlling quantum systems has demonstrated quantum advantages in sampling tasks~\cite{arute2019quantum,zhong2020quantum,madsen2022quantum},
and near-term quantum computers with hundreds of noisy qubits are emerging~\cite{preskill2018quantum}.
Variational quantum algorithms (VQAs) are one of the most promising applications of these near-term quantum computers.
By combining highly expressive parameterized quantum circuits (PQCs) and well-established parameter optimization techniques from machine learning (ML), 
VQAs are relevant for many important problems, including combinatorial optimizations~\cite{farhi2014quantum}, finding the ground state of a many-body Hamiltonian~\cite{peruzzo2014variational,wecker2015progress,kandala2017hardware,hadfield2019quantum}, and learning probability distributions~\cite{schuld2015introduction,biamonte2017quantum,schuld2019quantum,liu2021rigorous} (see Ref.~\cite{cerezo2021variational} for a recent review).

VQAs solve a problem by optimizing a cost function typically defined by the expectation value of a target-problem specific observable.
However, this optimization task can be challenging since the cost function landscapes are often too flat~\cite{mcclean2018barren,cerezo2021cost}.
This phenomenon, dubbed \textit{barren plateaus}, is characterized by the fact that all gradient components are exponentially small with the number of qubits when parameters are randomly sampled.
Given that barren plateaus are expected to be prevalent for sufficiently expressive ans\"{a}tze~\cite{holmes2022connecting},
\textit{trainability} of PQCs beyond tens of qubits is still an open question.

The issue of vanishing gradients is not entirely new, though.
Classical neural networks also suffered a similar vanishing gradient problem, but
theoretical and numerical advances have shown that clever neural network architectures~\cite{hochreiter1997long,glorot2011deep} or better initialization methods~\cite{glorot2010understanding,he2015delving} can sufficiently suppress the problem.
Likewise, recent studies explored quantum circuit ans\"{a}tze without barren plateaus~\cite{zhang2020toward,volkoff2021large,pesah2021absence,liu2022mitigating}, as well as initialization techniques that provide large gradients~\cite{grant2019initialization,jain2021graph,zhang2022gaussian,mele2022avoiding,rudolph2023synergistic}.
Still, it is unclear how useful barren-plateau-free ans\"{a}tze are for solving complex problems.
Also, proposed initialization methods mostly rely on heuristics and do not provide strong arguments for why such parameters should yield a large gradient.

In this paper, we resolve these issues for the Hamiltonian variational ansatz (HVA) by proposing a novel parameter initialization technique.
The HVA~\cite{wecker2015progress,hadfield2019quantum} is widely studied for solving the ground state of a many-body Hamiltonian since it can encode adiabatic evolution.
However, the HVA is still subject to the barren plateau problem~\cite{wiersema2020exploring,larocca2022diagnosing}.
Even though several initialization methods based on pre-training have been proposed to overcome this problem~\cite{mele2022avoiding,rudolph2023synergistic},
those methods not only require additional classical or quantum resources but rely on heuristics developed based on numerical results for less than $20$ qubits.
In contrast, our initialization scheme simply adds a constraint to the parameters and is free from additional computational resources.
Moreover, we provide a rigorous argument for why this scheme yields large gradients, supported by extensive numerical results up to 28 qubits.
We further propose an ansatz that imposes the constraint throughout the optimization process.
Such an ansatz is expressive enough for variational time evolution~\cite{li2017efficient,yuan2019theory,cirstoiu2020variational,lin2021real,keever2022classically}, with the benefit that the ansatz is free from barren plateaus.

The remainder of the paper is organized as follows.
After briefly introducing the problem and related concepts in Sec.~\ref{sec:preliminaries}, 
we show that the gradient does not decay exponentially when a circuit is described by local Hamiltonian evolution in Sec.~\ref{sec:var_grad_ham_dynamics}.
In Sec.~\ref{sec:circuit_local_ham_approx}, we find a parameter condition for which the HVA approximates to local Hamiltonian dynamics. We thus prove that a parameter regime for constant gradient magnitudes exists.
We then introduce an initialization method based on our proof and numerically compare it to other known parameter initialization techniques in Sec.~\ref{sec:numerical_comparisions_init}.
We summarize our results with concluding remarks in Sec.~\ref{sec:conclusion}.

\begin{figure}[t]
    \centering
    \includegraphics[width=0.9\linewidth]{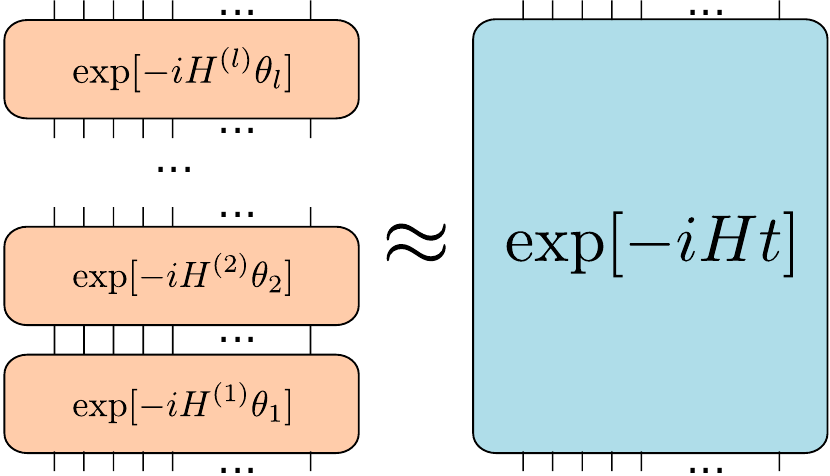}
    \caption{We find a parameter constraint such that layers of Hamiltonian evolution in the HVA (left) approximate to the time evolution under a single local Hamiltonian (right). Using the dynamical properties of local Hamiltonians, we argue that the HVA has large gradients.}
    \label{fig:hva_schematic_diagram}
\end{figure}

\section{Preliminaries}
\label{sec:preliminaries}
We consider a PQC for $N$ qubits with $l$ total layers, given by
\begin{align}
    U(\pmb{\theta}) = U_l(\theta_l)\cdots U_1(\theta_1),
\end{align}
where $\pmb{\theta}=(\theta_1,\cdots,\theta_l)$ is a vector of all parameters and $U_n(\theta)=e^{-i \theta G_n}$ is a unitary gate generated by $G_n$.
In VQAs, a cost function is typically given by
\begin{align}
    C(\pmb{\theta}) = \Tr[O U(\pmb{\theta})\rho_0 U^\dagger(\pmb{\theta})],
\end{align}
where $O$ is a Hermitian operator.
The cost function is then optimized with gradient-based methods. Direct computation of the gradient yields
\begin{align}
    \partial_n C = \frac{\partial C}{\partial \theta_n} = i \Tr[U_R \rho_0 U_R^{\dagger} [G_n, U_L^{\dagger}O U_L]] \label{eq:PQC}, 
\end{align}
where $U_R = U_{n-1} \cdots U_1$, $U_L=U_l \cdots U_n$, and $\rho_0$ is the initial state of the circuit.

For classes of PQCs, which form a 1-design, the gradient is unbiased for a given parameter set (i.e., $\mathbb{E}_{\pmb{\theta}}[\partial_n C] = 0$).
In this case, one can use the variance to quantify the magnitudes of gradients, which is given by
\begin{align}
    \mathrm{Var}[\partial_n C] &= \int d \mu(\pmb{\theta}) \left( \frac{\partial C}{\partial \theta_n} \right)^2 \nonumber \\
    &= - \int d \mu(\pmb{\theta}) \Tr[U_R \rho_0 U_R^{\dagger} [G_n, U_L^{\dagger}O U_L]]^2. \label{eq:var_g}
\end{align}

In the typical barren plateau scenario~\cite{mcclean2018barren,cerezo2021cost}, this quantity becomes close to $\mathcal{O}(1/D^2)$~\footnote{See Appendix~\ref{app:big-O} for the definition of big-$O$ and related notations}, which is the value evaluated under the assumption that $U_{R}$ or $U_L$ is a unitary 2-design.
Here, $D$ is the total dimension of the Hilbert space, which is $2^N$ for a system with $N$ qubits.
Hence, the variance decays exponentially with the number of qubits, which implies that the gradient is exponentially small for most values of the parameters (can be rigorously proven by Chebyshev's inequality).
Even though it is possible to optimize the cost function using a small gradient in principle, running the algorithm in real quantum hardware is extremely inefficient as estimating the gradient requires an exponential number of shots.

Next, we introduce the HVA. The HVA~\cite{wecker2015progress,hadfield2019quantum} is a natural ansatz for solving quantum many-body Hamiltonians. After decomposing a given Hamiltonian into $q$ terms $H=\sum_{j=1}^q c_j H^{(j)}$, where $\{c_j\}$ are real coefficients, the HVA is constructed as
\begin{align}
    &\ket{\psi(\{\theta_{i,j}\})}  \nonumber \\
    &= \prod_{i=p}^1 \bigl[e^{-i H^{(q)}\theta_{i,q}} \cdots e^{-i H^{(2)} \theta_{i,2}} e^{-i H^{(1)} \theta_{i,1}}\bigr] \ket{\psi_0}, \label{eq:HVA}
\end{align}
where $\ket{\psi_0}$ is a quantum state that can be easily prepared.
The ansatz consists of $p$ blocks, each containing $q$ layers.
Thus, the ansatz has a total of $l=pq$ layers. 
We also use the notation $\theta_{a}$ and $U_a$ to denote $\theta_{i,j}$ and $U_a = e^{-iH^{(j)}\theta_{i,j}}$ where $a = (i,j)$,
which enables us to interpret the HVA as a PQC given by Eq.~\eqref{eq:PQC}.

Throughout the paper, we restrict $H^{(j)}$ to be a $k$-local Hamiltonian in a given lattice for a constant $k$, i.e., each term in the Hamiltonian acts on at most $k$ \textit{geometrically nearby sites} in a given lattice.
This condition is satisfied for most of the many-particle spin-$1/2$ Hamiltonians.
For example, we decompose the one-dimensional transverse-field Ising model $\mathcal{H} = -\sum_iZ_i Z_{i+1} + h X_i$ into $H^{(1)} = -\sum_i Z_i Z_{i+1}$ and $H^{(2)}= - \sum_i X_i$. Then $H^{(1)}$ is 2-local and $H^{(2)}$ is 1-local.

This ansatz is powerful for solving the ground state of $H$, as it can encode the adiabatic evolution of the Hamiltonian~\cite{hadfield2019quantum}.
Despite the usefulness of the ansatz, however, training the HVA turned out to be non-trivial.
Some numerical studies have observed that the gradients decay exponentially with the system size~\cite{wiersema2020exploring,larocca2022diagnosing}, although
the magnitudes of gradients of the HVA are larger than one expects from a unitary 2-design~\cite{wiersema2020exploring}.

In this paper, we consider the case where the HVA is well approximated by time evolution under a local Hamiltonian, i.e., there are local Hamiltonians $H_L, H_R$ such that $U_L \approx e^{-i H_L t_L}$ and  $U_R \approx e^{-i H_R t_R}$ for some $t_L,t_R \geq 0$ (see Fig.~\ref{fig:hva_schematic_diagram}).
With this assumption, we provide strong analytic and numerical arguments that the gradient magnitudes, Eq.~\eqref{eq:var_g}, only decay at most polynomially in the number of qubits.
Although this assumption seems unrealistic, we later show that the HVA with a certain parameter restriction can satisfy this condition.

\section{Magnitudes of gradients in Hamiltonian dynamics}
\label{sec:var_grad_ham_dynamics}

In this section, we study the scaling of the gradient when the circuit is given by the time evolution under a time-independent local Hamiltonian.
We show that gradients in such circuits do not decay exponentially in both extreme regimes: short- and long-time evolution.

For short-time evolution, we prove a rigorous bound on the time that the gradient preserves its initial magnitudes.
Thus, a circuit have large gradients for a proper initial state.
On the other hand, for long-time evolution, we combine the universality of quantum thermalization~\cite{deutsch1991quantum,srednicki1994chaos,rigol2008thermalization} and our numerical results to argue that the gradient does not decay exponentially.

\subsection{Gradient scaling for short-time evolution}
In this subsection, assuming that (1) a circuit with $N$ qubits is given by $e^{-iHt}$ for a local Hamiltonian $H$ and (2) the initial state has a large gradient with a value of $\Theta(1)$, we prove that there exists $t_c = \Theta(1/N)$ such that the circuit maintains the large gradient when the total evolution time is less than $t_c$.
Our main result is the following proposition:

\begin{proposition}[Quantum speed limit of gradients] \label{prop:short_time_grad_var}
	For the HVA, the gradient of the cost function is given by 
	\begin{align}
	\partial_{n,m}C = \frac{\partial C}{\partial \theta_{n,m}} = i\Tr[U_R \rho_0 U_R^\dagger [H^{(m)}, U_L^\dagger O U_L]]
	\end{align}
	where $U_R = e^{-iH^{(m-1)}\theta_{n,m-1}} \cdots e^{-iH^{(1)} \theta_{1,1}}$ and $U_L = e^{-iH^{(q)}\theta_{p,q}}\cdots e^{-iH^{(m)}\theta_{n,m}}$.
	Assume that the gradient component, $\partial_{n,m}C$, is non-zero when the circuit is identity, i.e., $|\Tr[\rho_0 [H^{(m)},O] ]| > 0$, and there are local Hamiltonians $H_L$, $H_R$ such that $U_L = e^{-i H_L t_L}$ and $U_R = e^{-i H_R t_R}$ for some $t_R,t_L \geq 0$. Then, 
	\begin{align}
	    \Bigl| \frac{\partial C}{\partial \theta_{n,m}} \Bigr| \geq |\Tr[\rho_0 [H^{(m)},O] ]|/2 
	\end{align}
	for $t_R + t_L \leq t_c :=|\Tr[\rho_0 [H^{(m)}, O]]|/(4 K C)$, where $K = \max\{\Vert H_R \Vert, \Vert H^{(m)} \Vert\}$, $C = \max\{\Vert [H^{(m)}, O] \Vert, \Vert [H_L, O] \Vert \}$, and $\Vert \cdot \Vert$ is the operator norm.
\end{proposition}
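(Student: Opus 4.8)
The plan is a continuity argument around the identity circuit. Setting $t_L=t_R=0$ collapses $U_L,U_R$ to the identity, so the gradient reduces to $g_0 := i\Tr[\rho_0[H^{(m)},O]]$, whose magnitude is exactly the assumed nonzero value $|\Tr[\rho_0[H^{(m)},O]]|$. The key observation is that it then suffices to control how far switching on the evolution moves the gradient away from $g_0$: if I can show $|\partial_{n,m}C - g_0| \le |\Tr[\rho_0[H^{(m)},O]]|/2$ whenever $t_L+t_R\le t_c$, then the reverse triangle inequality gives $|\partial_{n,m}C| \ge |g_0| - |\partial_{n,m}C - g_0| \ge |\Tr[\rho_0[H^{(m)},O]]|/2$, which is the claim. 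So the whole proof reduces to a linear-in-time bound on the deviation $|\partial_{n,m}C - g_0|$.

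To bound the deviation I would split it into a ``left'' contribution (spreading of $O$ under $U_L$) and a ``right'' contribution (spreading of $\rho_0$ under $U_R$) by inserting the mixed term $i\Tr[U_R\rho_0 U_R^\dagger[H^{(m)},O]]$:
\begin{align}
\partial_{n,m}C - g_0 &= i\Tr\bigl[U_R\rho_0 U_R^\dagger [H^{(m)}, U_L^\dagger O U_L - O]\bigr] \nonumber\\
&\quad + i\Tr\bigl[(U_R\rho_0 U_R^\dagger - \rho_0)[H^{(m)},O]\bigr]. \nonumber
\end{align}
For each contribution I would pass to the Heisenberg/von Neumann picture and write the finite difference as a time integral of a commutator: with $O_L(s)=e^{iH_Ls}Oe^{-iH_Ls}$ one has $U_L^\dagger O U_L - O = \int_0^{t_L} i[H_L,O_L(s)]\,ds$, and with $\rho_R(s)=e^{-iH_Rs}\rho_0 e^{iH_Rs}$ one has $U_R\rho_0 U_R^\dagger - \rho_0 = -\int_0^{t_R} i[H_R,\rho_R(s)]\,ds$.

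The key estimate is that both integrands admit $s$-independent operator-norm bounds. For the left contribution, unitary invariance gives $\Vert[H_L,O_L(s)]\Vert=\Vert e^{iH_Ls}[H_L,O]e^{-iH_Ls}\Vert=\Vert[H_L,O]\Vert$, so after applying $|\Tr[\sigma M]|\le\Vert M\Vert$ with the density matrix $\sigma=U_R\rho_0 U_R^\dagger$ and the commutator bound $\Vert[X,Y]\Vert\le 2\Vert X\Vert\,\Vert Y\Vert$, the left contribution is at most $2\Vert H^{(m)}\Vert\,\Vert[H_L,O]\Vert\,t_L$. For the right contribution, cyclicity of the trace moves the commutator onto the fixed operator, $\Tr\bigl[[H_R,\rho_R(s)][H^{(m)},O]\bigr]=-\Tr\bigl[\rho_R(s)[H_R,[H^{(m)},O]]\bigr]$, and the density-matrix bound gives at most $2\Vert H_R\Vert\,\Vert[H^{(m)},O]\Vert\,t_R$. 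Summing and inserting the definitions of $K$ and $C$ yields $|\partial_{n,m}C - g_0|\le 2KC\,(t_L+t_R)$. Requiring this to be at most $|\Tr[\rho_0[H^{(m)},O]]|/2$ reproduces exactly the threshold $t_c=|\Tr[\rho_0[H^{(m)},O]]|/(4KC)$, the factor $4$ being the product of the $2$ from the commutator bound and the $2$ from the half-magnitude budget.

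The only genuinely substantive step is the $s$-independent integrand bound of the previous paragraph, and specifically the choice to keep each generator inside a commutator rather than estimating it on its own: unitary invariance keeps the left integrand at the \emph{commutator} norm $\Vert[H_L,O]\Vert$ instead of degrading to the much larger $\Vert H_L\Vert\,\Vert O\Vert$, which is what makes the deviation grow only linearly in time with a well-controlled slope and gives the result its speed-limit character. Notably, the locality of $H^{(m)},H_L,H_R$ is never used in the inequality itself; it enters only afterwards, ensuring $K=\Theta(N)$ while $C$ and $|\Tr[\rho_0[H^{(m)},O]]|$ remain $\Theta(1)$, so that $t_c=\Theta(1/N)$. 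I would therefore concentrate essentially all the effort on the left/right split and the two commutator-norm bounds, and treat the concluding reverse-triangle step as immediate.
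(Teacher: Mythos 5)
Your proposal is correct and follows essentially the same route as the paper: the paper defines $A(t_1,t_2)=i\Tr[e^{-iH_Rt_1}\rho_0e^{iH_Rt_1}[H^{(m)},e^{iH_Lt_2}Oe^{-iH_Lt_2}]]$ and integrates its partial derivatives along the path $(0,0)\to(t_R,0)\to(t_R,t_L)$, which is exactly your left/right telescoping split, and it uses the same two commutator-norm bounds ($2\Vert H_R\Vert\Vert[H^{(m)},O]\Vert$ and $2\Vert H^{(m)}\Vert\Vert[H_L,O]\Vert$ via unitary invariance) to get $|A(t_R,t_L)-A(0,0)|\le 2KC(t_R+t_L)$. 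Your phrasing via integral representations of the finite differences and the reverse triangle inequality is just a cosmetic repackaging of the paper's argument.
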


\begin{proof}
Let 
\begin{align}
    &A(t_1, t_2) \nonumber \\
    &= i\Tr [ e^{-iH_R t_1} \rho_0 e^{i H_R t_1} [H^{(m)}, e^{iH_L t_2} O e^{-iH_L t_2}]].
\end{align}
Then 
\begin{align}
    &|A(t_R, t_L) - A(0,0)| \nonumber \\
    &\qquad \leq \int_{0}^{t_R} dt_1 \Bigl| \frac{\partial A(t_1, 0)}{\partial t_1}  \Bigr| + \int_{0}^{t_L} dt_2 \Bigl| \frac{\partial A(t_R, t_2)}{\partial t_2}  \Bigr|.
\end{align}
We further have
\begin{align*}
    \Bigl| \frac{d A(t_1,0)}{\partial t_1} \Bigr| &= \Bigl|\Tr\bigl\{ [H_R, \rho_0(t_1)] [H^{(m)}, O] \bigr\} \Bigr| \\
    &\leq 2 \Vert H_R \Vert \Vert [H^{(m)}, O] \Vert \leq 2KC, \numberthis
\end{align*}
and 
\begin{align*}
    &\Bigl| \frac{d A(t_R,t_2)}{\partial t_2} \Bigr| \\
    &= \Bigl|\Tr\bigl\{ \rho_0(t_R) [H^{(m)}, [H_L, e^{i H_L t}Oe^{-i H_L t}]] \bigr\} \Bigr|\\
    &\leq 2 \Vert H^{(m)} \Vert \Vert [H_L, O] \Vert \leq 2KC, \numberthis
\end{align*}
where $\rho_0(t) = e^{-iH_R t} \rho_0 e^{i H_R t}$.

Integrating both sides, we have
\begin{align}
    |A(t_R, t_L) - A(0, 0)| \leq 2 K C(t_R + t_L). \label{eq:partial_deriv_bound}
\end{align}
By entering $t_R+t_L \leq t_c = |A(0, 0)|/(4KC)$, we obtain $|A(t_R, t_L) - A(0, 0)| \leq |A(0, 0)|/2$, i.e.,
\begin{align}
    &A(0,0) - |A(0,0)|/2 \leq A(t_R, t_L) \nonumber \\
    &\qquad \leq A(0,0) + |A(0,0)|/2.
\end{align}
We obtain the desired inequality as $A(t_R,t_L) \geq A(0,0)/2 > 0$, if $A(0,0) > 0$, and $A(t_R,t_L) \leq A(0,0)/2 < 0$, otherwise.
\end{proof}

Let us assume that all of $H^{(m)}$, $H_L$, and $H_R$ are $k$-local Hamiltonians, where each term acts at most $k$ nearby sites in a given lattice for a constant $k$, and $O$ is a local operator acting on at most a constant number of sites.
Under this assumption, which is the case we consider in this paper, we have $t_c = \Theta(1/N)$ when $|\Tr[\rho_0 [H^{(m)}, O]]| = \Theta(1)$.
We prove this fact in the rest of the subsection.

For any $k$-local Hamiltonian $H$, we can write
\begin{align}
H = \sum_{i \in \Lambda} h_i
\end{align}
where $\Lambda=\{1,\cdots,N\}$ is the collection of all sites, and $h_i$ is an operator supported by $k$ sites centered at $i$.
Formally, we write the support of $h_i$ (a set of sites $h_i$ acts on) as 
\begin{align}
\mathrm{supp}(h_i)=\{j \in \Lambda: \mathrm{dist}(i, j) \leq k\}
\end{align}
where $\mathrm{dist}(i, j)$ is the distance between two sites in the given lattice.
We thus have 
\begin{align}
    \Vert H \Vert \leq N \max_i \Vert h_i \Vert, \quad \Vert [H, O] \Vert \leq 2 s \Vert O \Vert \max_i \Vert h_i \Vert \label{eq:local_ham_norms}
\end{align}
for a local operator $O$. Here,
\begin{align}
s = |\{i \in \Lambda: \mathrm{dist}(i, O) \leq k \}|
\end{align}
is a constant for a given lattice, where $\mathrm{dist}(i, O) = \min_{j \in \mathrm{supp}(O)} \mathrm{dist}(i,j)$ and $\mathrm{supp}(O) \subset \Lambda$ is the support of $O$.
Given that $\Vert h_i \Vert $, $\Vert O \Vert$ are bounded by a constant for a spin system, and $s$ is a constant for a finite-dimensional lattice, we have
\begin{align}
\Vert H \Vert = \mathcal{O}(N), \quad \Vert [H, O]\Vert = \mathcal{O}(1)
\end{align}
for any $k$-local Hamiltonain $H$.
We also note that physical Hamiltonians must have $\Vert H \Vert = \Theta(N)$, which is a necessary condition to be thermodynamically well-defined.
Therefore, we obtain $t_c = \Theta(1/N)$ if the circuit has a large initial gradient component, i.e., if there exists $m$ such that $|\Tr[\rho_0 [H^{(m)},O] ]| = \Theta(1)$.

For example, when $H^{(m)}=\sum_i Y_i$, $O=Z_1$, we have $i\Tr[\rho_0 [H^{(m)},O]] = -2$ for $\ket{\psi_0} = |+\rangle^{\otimes N}$.
Moreover, we have $K = \Theta(N)$ and $C = \Theta(1)$ for Proposition~\ref{prop:short_time_grad_var}, when $H_R$ and $H_L$ are also $k$-local, which implies $t_c = \Theta(1/N)$.
Therefore, the gradient component is $\Theta(1)$ for all $t \leq t_c$.

While we mostly consider geometrically local Hamiltonians in this paper (i.e., local in a finite-dimensional lattice), our result can be extended to Hamiltonians defined on a general (hyper)graph.
Such a complex Hamiltonian appears when a fermionic Hamiltonian is translated to a spin Hamiltonian using, e.g., the Jordan–Wigner transformation (see, e.g., Ref.~\cite{wecker2015progress}).
These Hamiltonians can have smaller $t_c$ because $C=\max\{\Vert [H^{(m)}, O] \Vert, \Vert [H_L, O] \Vert \}$ in Proposition~\ref{prop:short_time_grad_var} may scale linearly with $N$.
For example, consider a Hamiltonian $H$, each term of which acts on all sites.
Namely, we have $H$ given as
\begin{align}
    H = \sum_{i=1}^N h_i,
\end{align}
where each $h_i$ is a Pauli string acting non-trivially on all sites, i.e., $h_i \in \{X, Y, Z\}^{\otimes N}$.
We still have $\Vert H \Vert = \Theta(N)$ for this Hamiltonian.
However, for a local operator, $O$, acting on site $i$, we can have $\Vert [O, H] \Vert = \Theta(N)$.
Thus, applying Proposition~\ref{prop:short_time_grad_var} to this Hamiltonian yields $t_c = \Theta(1/N^2)$ instead of $\Theta(1/N)$.

\subsection{Gradient scaling for long-time evolution}\label{sec:grad_scaling_long_time}

Next, we consider long-time evolution. Following usual arguments for equilibration~\cite{reimann2008foundation,rigol2008thermalization,linden2009quantum,short2011equilibration,gogolin2016equilibration}, we assume that a Hamiltonian $H$ follows the non-degenerate energy-gap condition: $E_i - E_j = E_k - E_l$ iff $i=k$ and $j=l$; or $i=j$ and $k=l$, where $E_i$ is the $i$-th eigenvalue of $H$ with the corresponding eigenvector $\ket{E_i}$.
With an additional assumption that the Hamiltonian thermalizes~\cite{deutsch1991quantum,srednicki1994chaos,rigol2008thermalization}, in the sense that the observable after equilibration gives a similar value to the thermal average,
it is known that the second moment of the Hamiltonian evolution behaves differently from a unitary 2-design~\cite{huang2019finite}.

Precisely, Huang et al.~\cite{huang2019finite} considered the saturated value of the out-of-time correlator (OTOC), a widely used measure for detecting quantum chaos. For local Hermitian operators, $O_i$ and $O_j$ acting on sites $i$ and $j$, respectively, the OTOC is defined by
\begin{align}
    \mathrm{OTOC}(O_i, O_j) &:= \Tr\Bigl[ \rho_0 \bigl( U O_i U^{\dagger} O_j\bigr)^2 \Bigr].
\end{align}
Here, $\rho_0$ is the initial state, and $U$ is a unitary operator determining the time evolution of the system.

One often considers the infinite temperature initial state given by $\rho_0 = \Id/2^N$ for a system with $N$ qubits.
When our unitary operator $U$ forms a 2-design, we obtain
\begin{align}
    \mathrm{OTOC}(O_i, O_j) &= \frac{1}{2^N}\Tr\Bigl[U O_i U^{\dagger} O_j U O_i U^{\dagger} O_j \Bigr] \\
                            & \xrightarrow{\text{Haar } U} -\frac{2^N}{2^{2N}-1},
\end{align}
for traceless $O_i$ and $O_j$ (e.g., local Pauli operators; see, e.g., Ref.~\cite{roberts2017chaos} for a proof).
Namely, $\mathrm{OTOC}(O_i, O_j)$ scales inverse \textit{exponentially} with $N$ in this case.
On the other hand, $\mathrm{OTOC}(O_i, O_j)$ scales only inverse \textit{polynomially} with the system size for local Hamiltonian evolution, i.e., when $U=e^{-iHt}$ for a local Hamiltonian $H$~\cite{huang2019finite}. Such a huge difference mainly comes from the fact that $U=e^{-iHt}$ conserves the energy, i.e., $[H,U]=0$.

\begin{figure}
    \centering
    \includegraphics[width=0.95\linewidth]{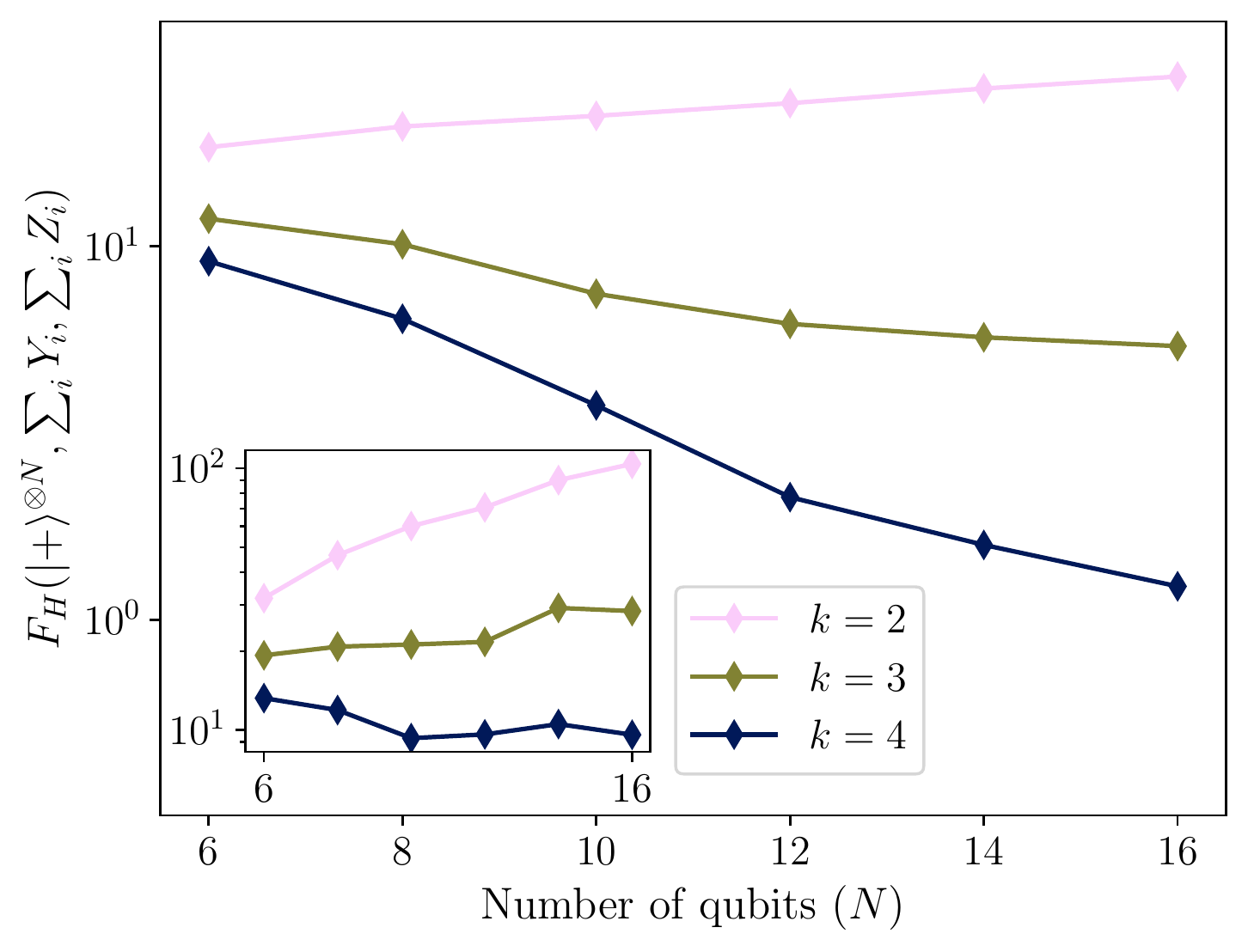}
    \caption{The lower bound $F_H(\ket{\psi}, H^{(1)}, O)$ for $\ket{\psi} = |+\rangle^{\otimes N}$, $H^{(1)}=\sum_i Y_i$, and $O = \sum_i Z_i$ averaged over $2^{10}$ randomly generated $k$-local Hamiltonians. Inset: The same results but for Hamiltonians with time-reversal symmetry.}
    \label{fig:random_ham_f}
\end{figure}

Similarly, one might expect that the variance of gradients saturates to a value that scales only inverse polynomially with $N$ for local Hamiltonian evolution. 
To see that this is the case, we compute the square of the first element of the gradient (for $\theta_{1,1}$) when $U_L = e^{-iH_L t}$ and the initial state is given as a pure state $\rho_0 = \ket{\psi_0} \bra{\psi_0}$.
Then we obtain a lower bound of $(\partial_{\theta_{1,1}}C)^2 = -\braket{\psi_0 | [H^{(1)}, O(t)]| \psi_0}^2$ where $O(t) = e^{iH_L t} O e^{-iH_L t}$ as follows:

\begin{proposition}\label{prop:lte_grad}
Assume that $H_L$ satisfies the non-degenerate energy-gap condition.
Then the long-time average of $-\langle \psi_0 | [H^{(1)}, O(t)] | \psi_0\rangle ^2$ is lower bounded by $F_{H_L}(\ket{\psi_0}, H^{(1)}, O)$.
Here, $F_{H}(\ket{\psi}, G, O)$ is a function given by
\begin{align*}
&F_{H}(\ket{\psi}, G, O) \\
&\qquad= 2 \sum_{ijkl} C_i^* G_{ij} O_{jk} |C_k|^2 O_{kj} G_{jl} C_l \\
&\qquad - \sum_{ijkl} C_i^* O_{ij} G_{jk} C_k C_j^* O_{ji} G_{il} C_l \\
&\qquad - \sum_{ijkl} C_i^* G_{ij} O_{jk} C_k C_l^* G_{lk}O_{kj} C_j, \numberthis
\end{align*}
where $C_i = \braket{E_i|\psi_0}$, $G_{ij} = \braket{E_i | G | E_j}$, and $O_{jk} = \braket{E_j | O | E_k}$.
Here, $\ket{E_i}$ is the $i$-th eigenstate of $H$.
\end{proposition}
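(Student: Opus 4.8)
The plan is to evaluate the long-time average directly in the eigenbasis of $H_L$. Specializing the gradient formula of Proposition~\ref{prop:short_time_grad_var} to the first parameter gives $\partial_{\theta_{1,1}}C=i\braket{\psi_0|[H^{(1)},O(t)]|\psi_0}$ with $U_R=\Id$ and $O(t)=e^{iH_Lt}Oe^{-iH_Lt}$. Since $[H^{(1)},O(t)]$ is anti-Hermitian, the expectation value $P(t):=\braket{\psi_0|[H^{(1)},O(t)]|\psi_0}$ is purely imaginary, so $-P(t)^2=|P(t)|^2\ge 0$; the object to be averaged is thus a genuine squared gradient, and the target $F$ should come out non-negative.

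First I would expand $O(t)=\sum_{jk}O_{jk}e^{i(E_j-E_k)t}\ket{E_j}\bra{E_k}$ and insert resolutions of the identity to write $P(t)=\sum_{jk}W_{jk}e^{i(E_j-E_k)t}$, where $W_{jk}=O_{jk}\bigl[(\sum_i C_i^*G_{ij})C_k-C_j^*(\sum_l G_{kl}C_l)\bigr]$ with $G=H^{(1)}$. Because $H^{(1)}$ and $O$ are Hermitian I would record $\sum_l G_{kl}C_l=\bigl(\sum_i C_i^*G_{ik}\bigr)^*$ and $O_{kj}=O_{jk}^*$; these are the identities that will eventually collapse the fourfold index sums into the three summands of $F$.

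The core computation is $\overline{|P(t)|^2}=\lim_{T\to\infty}\tfrac1T\int_0^T|P(t)|^2\,dt$. Writing the integrand as $\sum_{jk}\sum_{j'k'}W_{jk}W_{j'k'}^*e^{i(E_j-E_k-E_{j'}+E_{k'})t}$, only frequency-balanced terms survive the average. By the non-degenerate energy-gap assumption, $E_j-E_k=E_{j'}-E_{k'}$ forces either $(j,k)=(j',k')$ or the doubly-resonant case $j=k,\ j'=k'$. Keeping the first family gives $\sum_{jk}|W_{jk}|^2$, and substituting the explicit $W_{jk}$ and applying the Hermiticity relations above reproduces term-by-term the three summands defining $F_{H_L}(\ket{\psi_0},H^{(1)},O)$; this bookkeeping is the step that actually identifies $F$.

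The main obstacle is the doubly-resonant (zero-gap) block. All diagonal amplitudes $W_{jj}$ carry the single frequency zero, so their mutual cross terms also survive and assemble into the square-of-a-sum $\bigl|\sum_j W_{jj}\bigr|^2$, not the sum-of-squares suggested by $\sum_{jk}|W_{jk}|^2$. I would therefore work with the clean split $P(t)=\overline P+\delta P(t)$, $\overline P=\sum_j W_{jj}$ and $\delta P$ purely oscillatory, which gives $\overline{|P(t)|^2}=|\overline P|^2+\sum_{j\ne k}|W_{jk}|^2$ and hence the immediate rigorous bound $\overline{|P(t)|^2}\ge\sum_{j\ne k}|W_{jk}|^2$. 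Reconciling this with the closed form $F=\sum_{jk}|W_{jk}|^2$ requires comparing the diagonal contributions: since $O$ is Hermitian each $W_{jj}$ is purely imaginary, so $|\overline P|^2=(\sum_j\mathrm{Im}\,W_{jj})^2$ while the diagonal part of $F$ is $\sum_j(\mathrm{Im}\,W_{jj})^2$. Controlling the gap between the square of a sum and the sum of squares of these diagonal amplitudes is the one genuinely delicate point; the rest is the routine eigenbasis expansion and frequency matching above.
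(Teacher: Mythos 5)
Your route is the same one the paper takes in Appendix~\ref{app:long-time-avg-grad-ham-evol}: expand $P(t)=\langle\psi_0|[H^{(1)},O(t)]|\psi_0\rangle$ in the eigenbasis of $H_L$, time-average, and invoke the non-degenerate energy-gap condition to keep only the resonant index pairs. Your bookkeeping is correct as far as it goes, and the point at which you stop is a genuine gap: you prove $\overline{|P|^2}=\bigl|\sum_j W_{jj}\bigr|^2+\sum_{j\neq k}|W_{jk}|^2$ and correctly identify $F_{H_L}=\sum_{j,k}|W_{jk}|^2$ (via $W_{kj}=-W_{jk}^{*}$, so that $F=-\sum_{jk}W_{jk}W_{kj}$), so concluding $\overline{|P|^2}\geq F_{H_L}$ requires $\bigl|\sum_j W_{jj}\bigr|^2\geq\sum_j|W_{jj}|^2$. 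That inequality is false in general: the $W_{jj}$ are purely imaginary but their imaginary parts need not share a sign, so the square of the sum can be strictly smaller than the sum of squares (take $W_{11}=-W_{22}=i$ and all other amplitudes zero). No generic argument will close this step, so the proposal as written does not establish the stated bound.

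What you have flagged is in fact a soft spot in the paper's own derivation. The time average leading to Eq.~\eqref{eq:equilibration_grad_ineq} adds the two resonance families ($j=k'$, $k=j'$ on one hand and $j=k$, $j'=k'$ on the other) without subtracting their intersection, the tuples with all four indices equal; restoring the inclusion--exclusion correction adds $\sum_j W_{jj}^2=-\sum_j|W_{jj}|^2$ to the right-hand side, so the exact identity is $\overline{-P^2}=F_{H_L}-\langle\psi_0|[G,\tilde O]|\psi_0\rangle^2-\sum_j|W_{jj}|^2$, and the claimed inequality reduces to precisely the comparison you could not make. The bound you \emph{can} prove rigorously, $\overline{-P^2}\geq\sum_{j\neq k}|W_{jk}|^2$, is the off-diagonal restriction of $F$; it differs from $F$ by $\sum_j|W_{jj}|^2$, which involves the same diagonal matrix elements $O_{jj}$ that the paper already argues are negligible via eigenstate thermalization. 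The honest fix is either to state the proposition with $F$ replaced by its off-diagonal part, or to add a hypothesis bounding $\sum_j|W_{jj}|^2$; with either amendment your argument goes through, and so does the paper's.
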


A proof can be found in Appendix~\ref{app:long-time-avg-grad-ham-evol}.
Unfortunately, we cannot use techniques to analytically compute the OTOC for a maximally mixed initial state $\rho_0 = \Id/2^N$~\cite{huang2019finite}, as we here consider a pure initial state.
Instead, we provide numerical evidence that $F_H$ does not decay exponentially for translationally invariant local Hamiltonians with and without the time-reversal symmetry.
We especially consider time-reversal symmetric Hamiltonians as they are an important subclass of Hamiltonians widely considered for thermalization~\cite{kim2014testing}.

After creating a random $k$-local Hamiltonian $H$ (see Appendix~\ref{app:gen_rand_ham} for numerical details), we diagonalize $H$ and compute $F_H$ using the obtained eigenstates for the initial state $\ket{\psi} = \ket{+}^{\otimes N}$, $G=\sum_i Y_i$, and $O = \sum_i Z_i$.
As all observables, the Hamiltonian, and the initial state are translationally-invariant, we can compute $F_H$ within the translationally-invariant subspace.

We plot the result in Fig.~\ref{fig:random_ham_f} up to $N=16$ for random Hamiltonians without (main figure) and with (inset) the time-reversal symmetry. For $k=2$, we observe that the lower bound $F_H$ does not decay at all in both cases. For $k=3,4$, $F_H$ decreases with $N$ for the Hamiltonians without the time-reversal symmetry.
Even though it is not conclusive to tell the exact decaying rate of $F_H$ from this plot, we strongly believe that it is not exponential from the universality of thermalization dynamics for non-integrable models~\cite{deutsch1991quantum,srednicki1994chaos,rigol2008thermalization}, i.e., if $F_H$ for Hamiltonians with the time-reversal symmetry does not decay exponentially, $F_H$ for any thermalizing Hamiltonians also does not decay exponentially.

We also recall Ref.~\cite{larocca2022diagnosing}, which conjectured that the variance of gradients only scales inverse polynomially with the dimension of the dynamical Lie algebra $\mathcal{G}$ spanned by the gate generators, i.e., $\mathcal{G}=\braket{iG_1,\cdots,i G_l}_{\rm Lie}$ where $\braket{\cdot}_{\rm Lie}$ is the Lie closure containing all nested commutators of the listed elements.
A reason behind using dynamical Lie algebra is that the algebra generates any arbitrary circuit that the given PQC can express, i.e., there is a $g \in \mathcal{G}$ such that $U(\pmb{\theta})=e^{g}$ for any $\pmb{\theta}$.
However, for random Hamiltonians, as in our case, it is more natural to use a vector space of the Hamiltonians (which also generates a unitary operation but not a Lie algebra) instead.
As the dimension of $k$-local random Hamiltonians is $\Theta(N)$,
the conjecture with a slightly relaxed condition would also indicate that the gradient does not decay exponentially.

We explicitly write down our version of the conjecture, which is also supported by our numerical results, as follows:
\begin{conjecture} \label{conj:poly_decaying_random_local_ham}
Let $V$ be a vector space of local Hamiltonians. Then for any given initial state $\ket{\psi_0}$, $G \in V$, and a local operator $O$, we have
\begin{align}
    &-\int_{\nu \in V} d\nu \braket{\psi_0 | [G, e^{i\nu}Oe^{-i\nu}] | \psi_0}^2 = \frac{1}{\mathrm{poly}(N)}
\end{align}
where $d \nu$ is a proper measure for the vector space.
\end{conjecture}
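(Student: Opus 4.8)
The plan is to recognize this average over the whole vector space as simultaneously a ``time'' average and an average over the \emph{direction} of the random Hamiltonian, and to dispatch each piece using the two regimes already established in this section. I would fix a basis $\{P_j\}$ of local terms spanning $V$, write $\nu=\sum_j\alpha_j P_j$, and take $d\nu$ to be the i.i.d.\ Gaussian measure $\alpha_j\sim\mathcal{N}(0,\sigma^2)$ with $\sigma=\Theta(1)$, matching the order-one coefficients of a physical Hamiltonian. Decomposing the coefficient vector into its Euclidean length $r$ and a unit direction $\hat\nu$, the i.i.d.\ Gaussian factorizes into a radial weight $w(r)$ and a rotationally symmetric angular measure on directions, giving
\begin{align}
&-\int_{\nu\in V} d\nu\, \braket{\psi_0 | [G, e^{i\nu}Oe^{-i\nu}] | \psi_0}^2 \nonumber \\
&\qquad = -\int d\hat\nu \int_0^\infty dr\, w(r)\, \braket{\psi_0 | [G, e^{ir\hat\nu}Oe^{-ir\hat\nu}] | \psi_0}^2 .
\end{align}
For fixed $\hat\nu$, the radial variable $r$ acts exactly as an evolution time under the local Hamiltonian $\hat\nu$, so the inner integrand is precisely the object studied in both subsections above.

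I would then handle the radial integral through these two regimes. Near the origin, Proposition~\ref{prop:short_time_grad_var} (with $t_R=0$ and $H_L=\hat\nu$) guarantees that whenever $|\Tr[\rho_0[G,O]]|=\Theta(1)$ the integrand remains $\Theta(1)$ on the window $r\le t_c=\Theta(1/N)$. Since $\dim V=\Theta(N)$, however, the radial density $w(r)\propto r^{\dim V-1}e^{-r^2/2\sigma^2}$ concentrates at $r^\ast=\Theta(\sqrt N)$, far beyond this window, so the dominant contribution comes from the equilibrated regime. There I would invoke the non-degenerate-gap hypothesis: as $r\to\infty$ the integrand equilibrates, so the radial average agrees with the long-time average of Proposition~\ref{prop:lte_grad} up to subleading corrections and is bounded below by $F_{\hat\nu}(\ket{\psi_0},G,O)$. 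Either way, the conjecture reduces to showing that the \emph{angular} average $\int d\hat\nu\, F_{\hat\nu}(\ket{\psi_0},G,O)$ decays only polynomially in $N$.

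The hard part is exactly this angular average, which is why the statement is offered as a conjecture rather than a theorem. The function $F_{\hat\nu}$ is assembled from products of eigenstate overlaps $C_i=\braket{E_i|\psi_0}$ and matrix elements $G_{ij},O_{jk}$ of a \emph{random} local Hamiltonian, so averaging it demands control over the joint statistics of the spectrum and the eigenvectors of such ensembles. Establishing polynomial scaling would amount to proving an eigenstate-thermalization-type statement for random local Hamiltonians---that off-diagonal matrix elements are suppressed only polynomially and that energy is the sole conserved quantity constraining the dynamics, the Hamiltonian-ensemble analogue of the $[H,U]=0$ mechanism responsible for the polynomial OTOC saturation of Ref.~\cite{huang2019finite}. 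A concrete sub-plan would isolate the diagonal ($i=j$ together with $k=l$) contributions to $F_{\hat\nu}$, which survive the gap condition and encode precisely this conserved-energy structure, and try to lower bound them by $1/\mathrm{poly}(N)$ while showing that the remaining off-diagonal terms do not cancel this floor. Lacking such a theorem, I would present the reduction above together with the numerics of Fig.~\ref{fig:random_ham_f} as the supporting evidence.
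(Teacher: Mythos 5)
This statement is a conjecture, and the paper does not prove it: its only support is the lower bound $F_H$ of Proposition~\ref{prop:lte_grad}, the numerics of Fig.~\ref{fig:random_ham_f} for random $k$-local Hamiltonians, and the appeal to the universality of thermalization. Your proposal correctly recognizes this and lands on essentially the same evidence base --- reduce to $F_{\hat\nu}(\ket{\psi_0},G,O)$ and argue its average over random local Hamiltonians decays only polynomially, supported by the numerics and an ETH-type heuristic. The one genuinely new ingredient you add, the radial/angular factorization of the Gaussian measure with the radial coordinate playing the role of time, is a nice way to make the connection between the measure $d\nu$ and the two dynamical regimes explicit, but it smuggles in an extra unproven step: the non-degenerate-gap condition controls only the $T\to\infty$ average, not the equilibration \emph{timescale}, so identifying the radial average concentrated at $r^\ast=\Theta(\sqrt{N})$ with the long-time average of Proposition~\ref{prop:lte_grad} ``up to subleading corrections'' is itself an assumption on the same footing as the conjecture. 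The paper sidesteps this by never tying $d\nu$ to a time average at all --- it simply computes $F_H$ directly for sampled Hamiltonians. Your closing diagnosis of where the real difficulty lies (joint eigenvalue/eigenvector statistics of random local Hamiltonians, i.e.\ an ETH statement for such ensembles) matches the paper's implicit position, and your fallback of presenting the reduction plus the Fig.~\ref{fig:random_ham_f} numerics is exactly what the paper does.
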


\section{Approximating HVA to local Hamiltonian evolution} \label{sec:circuit_local_ham_approx}
In the previous section, we argued that a circuit given by the time evolution under a local Hamiltonian does not have barren plateaus.
In this section, we find a parameter condition for which the HVA given by Eq.~\eqref{eq:HVA} is well approximated by local Hamiltonian evolution.
We first interpret the HVA as a unitary operator generated by a time-dependent Hamiltonian.
Then, we utilize the Floquet-Magnus (FM) expansion to obtain an effective \textit{time-independent} Hamiltonian that describes the HVA within a small error.

We here consider each Hamiltonian $H^{(j)}$ satisfying the following conditions: (C1) $H^{(j)}$ is a sum of commuting Pauli strings, (C2) $H^{(j)}$ is $k$-local (each term acts on at most $k$ nearby qubits in a given lattice), and (C3) each Pauli string of $H^{(j)}$ uniquely supports a subset $X \subset \Lambda$ (e.g., $H^{(j)}$ cannot have terms $X_1X_2$ and $Z_1Z_2$ simultaneously).
In other words, we consider $H^{(j)}$ defined as
\begin{align}
    H^{(j)} = \sum_{|X| \leq k} h^{(j)}_X,
\end{align}
where the summation is over all subsets of sites $X\subseteq \Lambda$ whose length is $\leq k$, and $\Lambda = \{1,\cdots,N\}$ is a set of all sites.
In addition, $h^{(j)}_X$ is a single Pauli string (if there is a term whose support is $X$) or $0$ (otherwise), and $[h^{(j)}_X,h^{(j)}_Y] = 0$ for all $X,Y \subseteq \Lambda$. 
As we assume that $H^{(j)}$ is $k$-local, $h^{(j)}_X=0$ if $X$ contains any non-nearby sites (i.e., if there are $a, b \in X$ such that the distance between $a$ and $b$ is larger than $k$).

We also define parameters for the FM expansion. First, we define
\begin{align}
    H_{\rm max} &= \max_j \sum_{|X| \leq k} \Vert h^{(j)}_X \Vert \nonumber \\
    &= \max_j |\{X \subseteq \Lambda: h^{(j)}_{X} \neq 0 \}|, \label{eq:h0_def}
\end{align}
where we obtained the last equality using the fact that $\{h^{(j)}_X\}$ are commuting Pauli strings.
Thus, $H_{\rm max}$ is the maximum number of terms in $H^{(j)}$.
We also introduce a parameter $J$ that upper bounds the local interaction strength
\begin{align}
    \max_j \sum_{X: X \ni a} \Vert h^{(j)}_X \Vert \leq J, \qquad \forall a \in \Lambda.\label{eq:def_J}
\end{align}
As $\{h^{(m)}_X\}$ are Pauli strings, we can use
\begin{align}
    J = \max_{a \in \Lambda} \max_j | \{X: X \ni a \text{ and } h_X^{(j)} \neq 0 \} |.
\end{align}
From the locality of Hamiltonians $\{H^{(j)}\}$, 
we have $H_{\rm max} = \Theta(N)$ (see discussion below Proposition~\ref{prop:short_time_grad_var}),
and $J$ is upper bounded by the number of vertices whose $L^1$ distance to the origin is $\leq k$, which is a constant for a finite-dimensional lattice.

We further assume that all parameters $\{\theta_{i,j}\}$ in the HVA are larger than $0$.
Under this setup, the following Proposition shows that the subcircuits $U_R$ and $U_L$ of the HVA can be approximated by the time evolution under a few-body Hamiltonian when the sum of parameters is small.

\begin{proposition}\label{prop:effective_ham_fm}
For the HVA composed of $H^{(j)}$, given in Eq.~\ref{eq:HVA}, we consider a subcircuit $U_R = e^{-iH^{(j-1)}\theta_{i,j-1}} \cdots e^{-iH^{(1)} \theta_{1,1}}$.
We additionally assume that all $H^{(j)}$ satisfy the conditions (C1-C3) defined above.
Then, there is a Hamiltonian $H_R^{(n)}$ acting on at most $(n+1)k$-local sites such that
\begin{align}
    &\Bigl\Vert U_R - e^{-i H_R^{(n)} t_R} \Bigr\Vert \nonumber \\
    & \leq 6 H_{\rm max}  2^{-n_0} t_R + \frac{2 H_{\rm max} (2kJ)^{n+1}}{(n+2)^2}(n+1)! t_R^{n+2} \label{eq:fm_bound}
\end{align}
with $t_R = \theta_{1,1} + \cdots + \theta_{i,j-1}$ for all $n \leq n_0 = \lfloor 1/(32 k J t_R) \rfloor$.
Likewise, there is a Hamiltonian $H_L^{(n)}$ that approximates $U_L = e^{-iH^{(q)}\theta_{p,q}}\cdots e^{-iH^{(j)}\theta_{i,j}}$ with the same error but for $t_L = \theta_{i,j} + \cdots + \theta_{p,q}$.
\end{proposition}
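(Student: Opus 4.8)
The plan is to reinterpret the ordered product $U_R$ as a time-ordered evolution and then control a truncated Floquet-Magnus series by combining an elementary per-order norm bound with a rigorous optimal-truncation estimate. First I would read $U_R = e^{-iH^{(j-1)}\theta_{i,j-1}}\cdots e^{-iH^{(1)}\theta_{1,1}}$ as the time-ordered exponential $\mathcal{T}\exp(-i\int_0^{t_R} H(s)\,ds)$ of the piecewise-constant Hamiltonian $H(s)$ that equals $H^{(1)}$ on $[0,\theta_{1,1}]$, equals $H^{(2)}$ on the next subinterval, and so on, so that the total evolution time is exactly $t_R=\theta_{1,1}+\cdots+\theta_{i,j-1}$; this is where the nonnegativity of all parameters is used. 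The Magnus expansion then writes $U_R = e^{-i\Omega}$ with $\Omega=\sum_{m\ge 1}\Omega_m$, where $\Omega_m$ is an $m$-fold time integral of $(m-1)$-fold nested commutators of $H(s)$, and I would take the candidate effective Hamiltonian to be the truncation $H_R^{(n)} := \frac{1}{t_R}\sum_{m=1}^{n+1}\Omega_m$, which is Hermitian so that $e^{-iH_R^{(n)}t_R}$ is unitary.

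Next I would settle locality and norms. Because each $H^{(j)}$ is $k$-local and satisfies (C1--C3), a commutator of two $k$-local terms is nonzero only when their supports overlap and is then at most $2k$-local; iterating this shows $\Omega_m$ is at most $mk$-local, so $H_R^{(n)}$, which stops at $m=n+1$, is $(n+1)k$-local as claimed. The technical heart is a per-order norm bound of the form $\Vert\Omega_m\Vert \le \frac{c\,H_{\rm max}}{m^2}(2kJ)^{m-1}(m-1)!\,t_R^{m}$: the factor $(2kJ)^{m-1}$ comes from using the local interaction bound $J$ of Eq.~\eqref{eq:def_J} to count how many of the nested commutators survive (each term can fail to commute only with its $\mathcal{O}(kJ)$ neighbors), the $(m-1)!\,t_R^{m}$ comes from the nested time integrations, and the single extensive prefactor $H_{\rm max}$ from the outermost operator. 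Deriving this combinatorial bound rigorously is where most of the work lies.

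To assemble the error, I would use that $-i\Omega$ and $-iH_R^{(n)}t_R$ are anti-Hermitian, so the unitary-invariance inequality $\Vert e^{A}-e^{B}\Vert\le\Vert A-B\Vert$ applies. Rather than sum the factorially divergent full tail directly, I would split at the optimal order $n_0=\lfloor 1/(32kJt_R)\rfloor$, writing $\Vert U_R-e^{-iH_R^{(n)}t_R}\Vert \le \Vert U_R - e^{-iH_R^{(n_0)}t_R}\Vert + \Vert e^{-iH_R^{(n_0)}t_R} - e^{-iH_R^{(n)}t_R}\Vert$. The first term is bounded by a rigorous Floquet-prethermalization estimate of Abanin--De Roeck--Ho--Huveneers type, which shows that truncating at the optimal order leaves only an exponentially small residual $\le 6H_{\rm max}2^{-n_0}t_R$. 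The second term is at most $\Vert(H_R^{(n_0)}-H_R^{(n)})t_R\Vert = \Vert\sum_{m=n+2}^{n_0+1}\Omega_m\Vert$, and the choice of $n_0$ forces the consecutive ratio $\Vert\Omega_{m+1}\Vert/\Vert\Omega_m\Vert\approx 2kJ\,m\,t_R\le 1/16$ for all $m\le n_0$, so this sum is dominated by its leading term $\Omega_{n+2}$ plus a geometric tail, yielding $\frac{2H_{\rm max}(2kJ)^{n+1}}{(n+2)^2}(n+1)!\,t_R^{n+2}$. Adding the two contributions gives Eq.~\eqref{eq:fm_bound}, and the statement for $U_L$ follows by the identical argument with $t_L$ in place of $t_R$.

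The two steps I expect to be the main obstacles are, first, the combinatorial per-order bound on $\Vert\Omega_m\Vert$ that correctly propagates the geometric locality through all nested commutators, and second, the $2^{-n_0}$ optimal-truncation estimate, which cannot be obtained by naive series summation (the Magnus series only grows asymptotically) and instead requires the full rigorous Floquet-Magnus/prethermalization machinery.
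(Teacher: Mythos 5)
Your proposal follows essentially the same route as the paper: both reinterpret the ordered product as time evolution under a piecewise-constant Hamiltonian, invoke the rigorous truncated Floquet--Magnus machinery (Theorem~1 and Corollary~1 of Kuwahara--Mori--Saito, i.e.\ the split at the optimal order $n_0$ giving the $6H_{\rm max}2^{-n_0}t_R$ residual plus the per-order bound $\overline{\Omega}_{n+1}\tau^{n+2}$), and obtain $(n+1)k$-locality from the nested-commutator structure. The only differences are cosmetic (your $\Omega_m$ indexing starts at $1$ rather than $0$, and you sketch the derivation of the corollary that the paper simply cites), so the argument is correct as stated.
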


We derive the bound and properties of $H_{R,L}^{(n)}$ in Appendix~\ref{app:approximating_hva}. Our derivation is based on the truncated FM expansion rigorously proven in Ref.~\cite{kuwahara2016floquet}.

The above bound tells us that $U_{R,L}$ can be approximated by local Hamiltonian evolution when $t_{R,L}$ are small.
For example, when $t_{R} = \mathcal{O}(1/N)$ and for a constant $n$,
the first term in the bound is exponentially small in $N$ and the second term is $\mathcal{O}(1/N^{n+1})$.
Then, one may further employ Proposition~\ref{prop:short_time_grad_var} to get a large gradient, which we summarize as the following theorem:

\begin{theorem} \label{thm:small_parameter_large_grad}
For the HVA [Eq.~\eqref{eq:HVA}] and for a local observable $O$ acting on at most $\mathcal{O}(1)$ sites,
assume that there exists an initial state $\rho_0 = \ket{\psi_0}\bra{\psi_0}$ which gives $g:=|\Tr[\rho_0 [H^{(m)}, O]]| = \Theta(1)$ regardless of $N$.
Then, there is $\tau_0 = \Theta(1/N)$ such that
\begin{align}
    \Bigl| \frac{\partial C}{\partial \theta_{n,m}} \Bigr| \geq \frac{g}{4}
\end{align}
for all $n$, if $\sum_{ij}\theta_{i,j} = t_L + t_R \leq \tau_0$.
\end{theorem}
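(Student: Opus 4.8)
The plan is to combine Proposition~\ref{prop:effective_ham_fm}, which replaces the HVA subcircuits by genuine local Hamiltonian evolutions, with Proposition~\ref{prop:short_time_grad_var}, which lower bounds the gradient of such an idealized circuit, and then to control the error incurred by the replacement. First I would fix a constant Floquet--Magnus truncation order $r$ (say $r=1$) and invoke Proposition~\ref{prop:effective_ham_fm} to obtain $(r+1)k$-local Hamiltonians $H_R^{(r)}, H_L^{(r)}$ with $\Vert U_R - e^{-iH_R^{(r)}t_R}\Vert \leq \epsilon_R$ and $\Vert U_L - e^{-iH_L^{(r)}t_L}\Vert \leq \epsilon_L$. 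Since the gradient is taken with respect to $\theta_{n,m}$, the relevant split gives $t_R = \theta_{1,1}+\cdots+\theta_{n,m-1}$ and $t_L = \theta_{n,m}+\cdots+\theta_{p,q}$, so that $t_R + t_L = \sum_{ij}\theta_{i,j}$; imposing $\sum_{ij}\theta_{i,j}\leq\tau_0=\Theta(1/N)$ therefore forces both $t_R,t_L \leq \tau_0$. With $H_{\rm max}=\Theta(N)$ and $k,J=\Theta(1)$, one has $n_0 = \lfloor 1/(32kJt)\rfloor = \Theta(N)$, so the first term of Eq.~\eqref{eq:fm_bound} is exponentially small and the second is $O((r+1)!/N^{r+1})$; hence $\epsilon_R,\epsilon_L = O(1/N^{r+1})$ for constant $r$.

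Next I would apply Proposition~\ref{prop:short_time_grad_var} to the idealized gradient $\tilde\partial_{n,m}C$, defined by replacing $U_R, U_L$ with $e^{-iH_R^{(r)}t_R}, e^{-iH_L^{(r)}t_L}$ in the gradient formula. Because $H_R^{(r)}, H_L^{(r)}$ are geometrically local with constant locality $(r+1)k$, the discussion following Proposition~\ref{prop:short_time_grad_var} gives $K=\max\{\Vert H_R^{(r)}\Vert,\Vert H^{(m)}\Vert\}=\Theta(N)$ and $C=\max\{\Vert[H^{(m)},O]\Vert,\Vert[H_L^{(r)},O]\Vert\}=\Theta(1)$, so the quantum speed limit yields $|\tilde\partial_{n,m}C|\geq g/2$ for all $t_R+t_L\leq t_c = g/(4KC)=\Theta(1/N)$. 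The hypothesis $|\Tr[\rho_0[H^{(m)},O]]|=g=\Theta(1)$ supplies the required nonzero initial gradient, and the argument is uniform in the block index $n$ since $t_c$ depends only on norms, not on where the circuit is split.

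The crux is to bound the difference between the true gradient and the idealized one. Writing $P=U_R\rho_0 U_R^\dagger$, $Q=U_L^\dagger O U_L$ and their tilded idealized counterparts, I would split
\begin{align}
    &\partial_{n,m}C - \tilde\partial_{n,m}C \nonumber \\
    &= i\Tr[(P-\tilde P)[H^{(m)},Q]] + i\Tr[\tilde P\,[H^{(m)},Q-\tilde Q]].
\end{align}
Bounding the first term by $\Vert P-\tilde P\Vert_1\,\Vert[H^{(m)},Q]\Vert$ and the second by $\Vert[H^{(m)},Q-\tilde Q]\Vert$ (using $\Vert\tilde P\Vert_1=1$), together with $\Vert P-\tilde P\Vert_1\leq 2\epsilon_R$, $\Vert Q-\tilde Q\Vert\leq 2\Vert O\Vert\epsilon_L$, and the crude estimate $\Vert[H^{(m)},\,\cdot\,]\Vert\leq 2\Vert H^{(m)}\Vert\Vert\cdot\Vert$, yields $|\partial_{n,m}C-\tilde\partial_{n,m}C| = O(\Vert H^{(m)}\Vert(\epsilon_R+\epsilon_L)) = O(N\cdot N^{-(r+1)})=O(1/N^{r})$. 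The main obstacle is exactly this amplification by $\Vert H^{(m)}\Vert=\Theta(N)$: the generator whose commutator defines the gradient is extensive, so a naive bound loses a factor of $N$. The point is that Proposition~\ref{prop:effective_ham_fm} still delivers $\epsilon_{R,L} = O(1/N^{r+1})$ even though $H_{\rm max}=\Theta(N)$ appears in Eq.~\eqref{eq:fm_bound}, because $t=\Theta(1/N)$ suppresses it; choosing $r\geq 1$ therefore makes the replacement error $O(1/N^{r})$, which is at most $g/4$ once $N$ is large enough (or after shrinking $\tau_0$ by an $N$-independent constant). Combining, $|\partial_{n,m}C|\geq|\tilde\partial_{n,m}C|-|\partial_{n,m}C-\tilde\partial_{n,m}C|\geq g/2-g/4=g/4$ for $\sum_{ij}\theta_{i,j}\leq\tau_0=\Theta(1/N)$, as claimed.
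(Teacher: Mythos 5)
Your proposal is correct and follows essentially the same route as the paper's proof in Appendix~E: truncate the Floquet--Magnus expansion at first order to get replacement errors $\epsilon_{R,L}=\mathcal{O}(1/N^2)$, apply Proposition~\ref{prop:short_time_grad_var} to the idealized evolution to get $g/2$, and absorb the $\Theta(N)$ amplification from $\Vert H^{(m)}\Vert$ in the error-propagation step so the total perturbation stays below $g/4$. Your telescoping decomposition of $\partial_{n,m}C-\tilde\partial_{n,m}C$ is equivalent to the paper's Lemma~E.2 and yields the same $8\epsilon\Vert H^{(m)}\Vert\Vert O\Vert$ bound.
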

The proof can be found in Appendix~\ref{app:prove_small_parameter_large_grad}.

We provide two remarks on Theorem~\ref{thm:small_parameter_large_grad}.
First, if there exists any constraint with $\tilde{\tau}_0$ such that a gradient component is bounded below by a constant for all $\sum_{ij}\theta_{ij} \leq \tilde{\tau}_0$, then $\tilde{\tau}_0 \leq \pi/4$.
This is because one can easily find the HVA with suitable $\rho_0$ and $O$ which satisfies $|\Tr[\rho_0 [H^{(m)}, O]]| = \Theta(1)$, but $\partial_{n,m}C = i\Tr[U(\pmb{\theta}) \rho U(\pmb{\theta})^{\dagger}[H^{(m)},O]]=0$ for $\sum_{i,j}\theta_{i,j}=\pi/4$. We provide such an example in Appendix~\ref{app:finite_time_evolution_zero_grad}.
This implies that Theorem~\ref{thm:small_parameter_large_grad} can be improved at most $\tau_0 = \Theta(1)$ in our current set-up.

Second, the theorem implies that for \textit{any} probability distribution $p(\pmb{\theta})$ defined for $\theta_{i,j} \geq 0$ and $\sum_{i,j}\theta_{i,j} \leq \tau_0$, 
\begin{align}
    \int d\pmb{\theta} p(\pmb{\theta}) \Bigl( \frac{\partial C}{\partial \theta_{n,m}} \Bigr)^2 = \Theta(1),
\end{align}
which is much stronger than the non-exponential decay of gradients.
If one considers a different condition, e.g., a polynomially decaying gradients under uniformly generated initial parameters,
a parameter constraint may be further relaxed. 
Namely, we open up the possibility that a constraint with $\tau_1 = \Omega(1)$ exists such that
\begin{align}
    \int_{\theta_{i,j} \geq 0, \sum_{i,j}\theta_{i,j} \leq \tau_1} \prod_{i,j} d\theta_{i,j} \Bigl( \frac{\partial C}{\partial \theta_{n,m}} \Bigr)^2 =  \frac{1}{\mathrm{poly}(N)}
\end{align}
is satisfied for all $\theta_{i,j}\geq 0$ and $\sum_{i,j}\theta_{i,j} \leq \tau_1$.
We numerically investigate related scenarios in the following section.

\section{Numerical comparison between initialization methods} \label{sec:numerical_comparisions_init}
Theorem~\ref{thm:small_parameter_large_grad} tells us that there is $\tau_0=\Theta(1/N)$ such that the HVA does not have barren plateaus when the sum of all parameters is less than $\tau_0$.
Still, the exact value of $\tau_0$ for the Theorem is difficult to obtain or can be unrealistically small.
Thus, in this subsection, we introduce an initialization method based on Theorem~\ref{thm:small_parameter_large_grad} and compare it to the small constant initialization considered in Refs.~\cite{wierichs2020avoiding,park2021efficient,bosse2022probing,kattemolle2022variational}.

We numerically test the following three different initialization methods.
(1) \textbf{Random}: complete uniformly random initialization, such that all parameters are from $\mathcal{U}_{[0, 2\pi]}$,
(2) \textbf{Constrained}: the sum of parameters in each layer is constrained to be $T = c/N$ with a constant $c$, i.e., $\sum_{j}\theta_{i,j}=T$ for all $i$, and
(3) \textbf{Small}: $\theta_{i,j} \sim \mathcal{U}_{[0, \epsilon]}$ for a small $\epsilon$ independent to $N$~\cite{wierichs2020avoiding,park2021efficient,bosse2022probing}.
For method (2), we show that a relatively large value of $c=\pi/2$ already gives $\Theta(1)$ gradient magnitudes.
On the other hand, we observe two different scaling behaviors for the constant small initialization [method (3)].
There is a value $N_0$ depending on $p$ and $\epsilon$ such that the gradient magnitudes decay exponentially for $N < N_0$ whereas they only decay polynomially for $N > N_0$.
This observation suggests that there can be another parameter regime in the HVA that does not have barren plateaus.

We use the HVA for the one-dimensional (1D) and two-dimensional (2D) spin-$1/2$ Heisenberg-XYZ models $\mathcal{H} = \sum_{\braket{a,b}} J_x X_a X_b + J_y Y_a Y_b + J_z Z_a Z_b$ to test these methods, which is given by
\begin{align*}
    \ket{\psi(\pmb{\theta})} &=  \prod_{i=p}^1 e^{-i \theta_{i,3} \sum_{\braket{a,b}} Z_a Z_b } e^{-i \theta_{i,2} \sum_{\braket{a,b}} Y_a Y_b } \\
    & \qquad \times e^{-i \theta_{i,1} \sum_{\braket{a,b}} X_a X_b } \ket{\psi_0} \numberthis \label{eq:hva_xyz}
\end{align*}
where $\braket{a,b}$ are two nearest neighbors and $\ket{\psi_0}$ is the N\'{e}el state in the given lattice.
We use $N$ spins in the periodic boundary condition (a ring) for the one-dimensional model.
For the two-dimensional model, we consider a rectangular lattice with the periodic boundary condition (a torus) size of $L_x \times L_y$.
Thus the N\'{e}el state in these lattices are given by $\ket{\psi_0} = (\ket{\downarrow \uparrow}^{\otimes N/2} + \ket{\uparrow \downarrow }^{\otimes N/2})/\sqrt{2}$ and $[(\ket{\downarrow \uparrow}^{\otimes L_x/2}\ket{\uparrow \downarrow}^{\otimes L_x/2})^{\otimes L_y/2}+ (\ket{\uparrow \downarrow}^{\otimes L_x/2}\ket{\downarrow \uparrow}^{\otimes L_x/2})^{\otimes L_y/2}]/\sqrt{2}$ for the 1D and 2D models, respectively.

\begin{figure}
    \centering
    \includegraphics[width=0.9\linewidth]{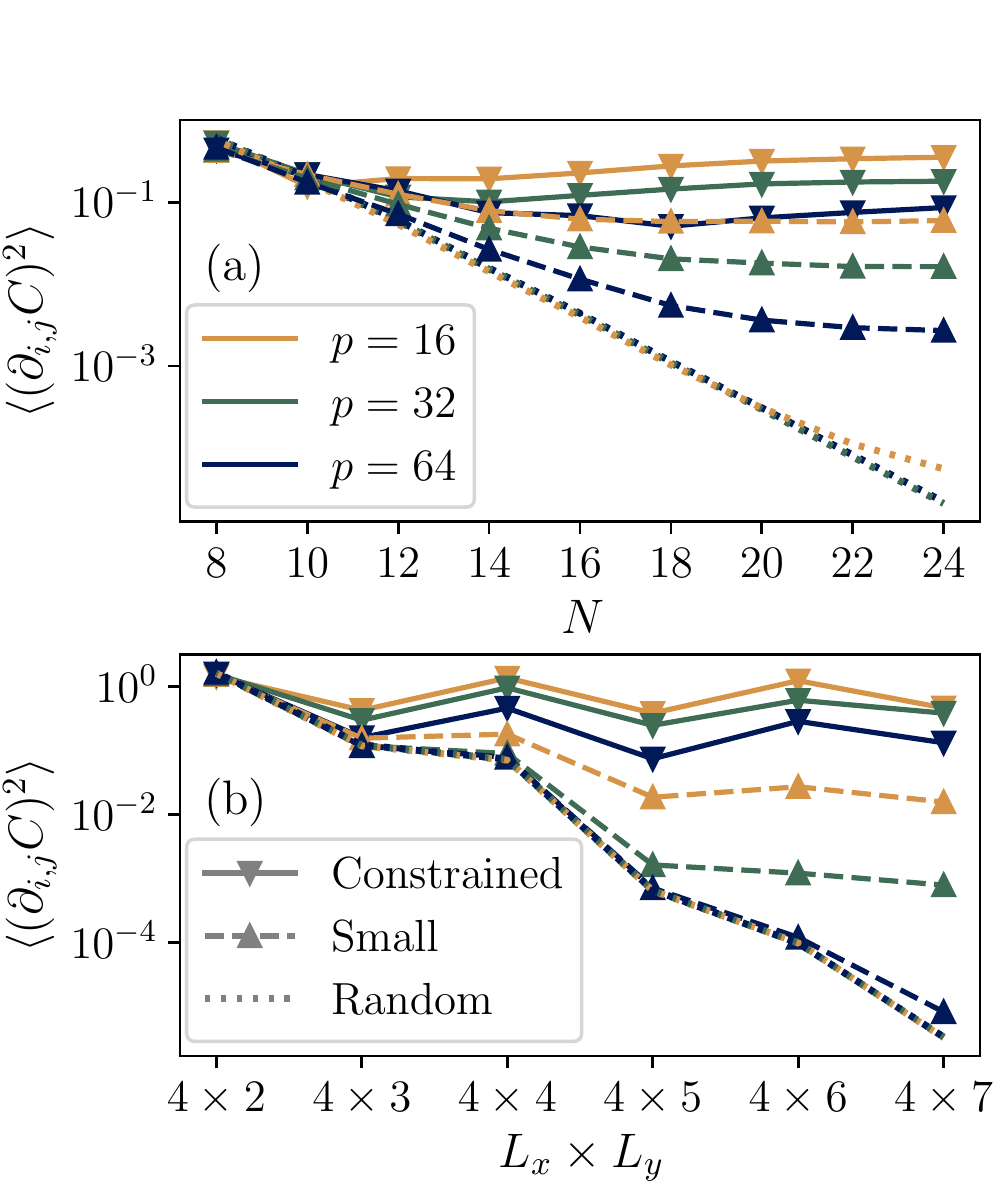}
    \caption{(a) Scaling of the gradient square $(\partial_{i,j}C)^2$ from the HVA for the 1D Heisenberg-XYZ model with different depths $p$ (see main text for details).
    Plots show results from the constrained (solid), small (dashed), and completely random (dotted) parameter initializations.
    We compute $(\partial_{i,j}C)^2$ for $2^{10}$ parameters samples from each distribution and plot the averaged results over all samples and gradient components ($i,j$).
    (b) The same plot as (a) but for the 2D Heisenberg-XYZ model with the lattice size $L_x \times L_y$.
    We also see that the results fluctuate for odd and even $L_y$ because the 2D N\'{e}el state (our initial state) violates the symmetry of the Hamiltonian for odd $L_y$. 
    We also compute the relative standard deviation, $r=\sigma(X)/\mathbb{E}[X]$, where $X:=\sum_{ij} (\partial_{i,j}C)^2/(3p)$ is the squared partial derivatives averaged over all parameters. Here, the standard deviation and the expectation value are taken over the circuit instances.
    The values of $r$ for the 1D model with $p=64$ and $N=24$ are given by $0.20$, $0.53$, and $0.13$ for the contained, small, and random initialization, respectively.
    For the 2D model with $p=64$ and $L_x \times L_y = 4 \times 7$, we obtained $r\approx 0.39$ (constrained), $1.02$ (small), $0.11$ (random), respectively. 
    }
    \label{fig:grad_scaling}
\end{figure}

\subsection{Scaling of gradients}
We compute gradients of the cost function with $O=Y_0Y_1$ (thus $C=\braket{\psi(\pmb{\theta})|Y_0Y_1|\psi(\pmb{\theta})}$) obtained from different initialization methods.
For constrained initialization, random values $\tilde{\theta}_{i,j}$ are first sampled from the uniform distribution, i.e., $\tilde{\theta}_{i,j} \sim \mathcal{U}_{[0, 2\pi]}$, and then parameters are assigned by normalizing them: $\theta_{i,j}= \tilde{\theta}_{i,j} \times T/(\sum_{j=1}^3 \tilde{\theta}_{i,j})$ for all $i,j$. This method ensures that $\sum_{j=1}^3 \theta_{i,j}=T$. We here use $T=\pi/(2N)$.
The results are compared to small parameters $\theta_{i,j} \sim \mathcal{U}_{[0, \epsilon]}$ with $\epsilon=0.2$ as well as complete random parameters $\theta_{i,j} \sim \mathcal{U}_{[0,2\pi]}$.
For each set of system parameters (size and depth $p$) and the initialization method,
we compute all gradient components and plot the averaged squared magnitudes (i.e., we averaged $\langle (\partial_{i,j} C)^2\rangle = \sum_{i,j} (\partial_{i,j} C)^2/(3p)$ over $2^{10}$ random circuit instances).

The results for the 1D and 2D models are shown in Fig.~\ref{fig:grad_scaling}(a) and (b), respectively.
The 1D model clearly shows that the magnitudes of gradients do not decay with $N$, i.e., $(\partial_{i,j}C)^2 \approx \Theta(1)$, for the constrained initialization, which is consistent with Theorem~\ref{thm:small_parameter_large_grad}.
On the other hand, the gradient magnitudes decay exponentially with $N$ when the complete random initialization is used with $p \in [16, 32]$.
However, we could observe an interesting behavior when parameters are initialized to be small ($\theta_{i,j}\sim\mathcal{U}_{[0,\epsilon]}$).
In this case, the gradient decays exponentially up to some $N_0$, i.e., for $N \leq N_0 \approx 18$, but it decays slower after that.
One can already see this signature even for the complete random initialization when $p=16$ and $N=24$, where the averaged gradient magnitudes are larger than that from $p\in[32,64]$.
We also see similar behaviors for the 2D model, besides the results from each initialization oscillate for odd and even $L_y$, since our initial state (2D N\'{e}el state) is not fully symmetric for odd $L_y$.

\begin{figure}
    \centering
    \includegraphics[width=0.9\linewidth]{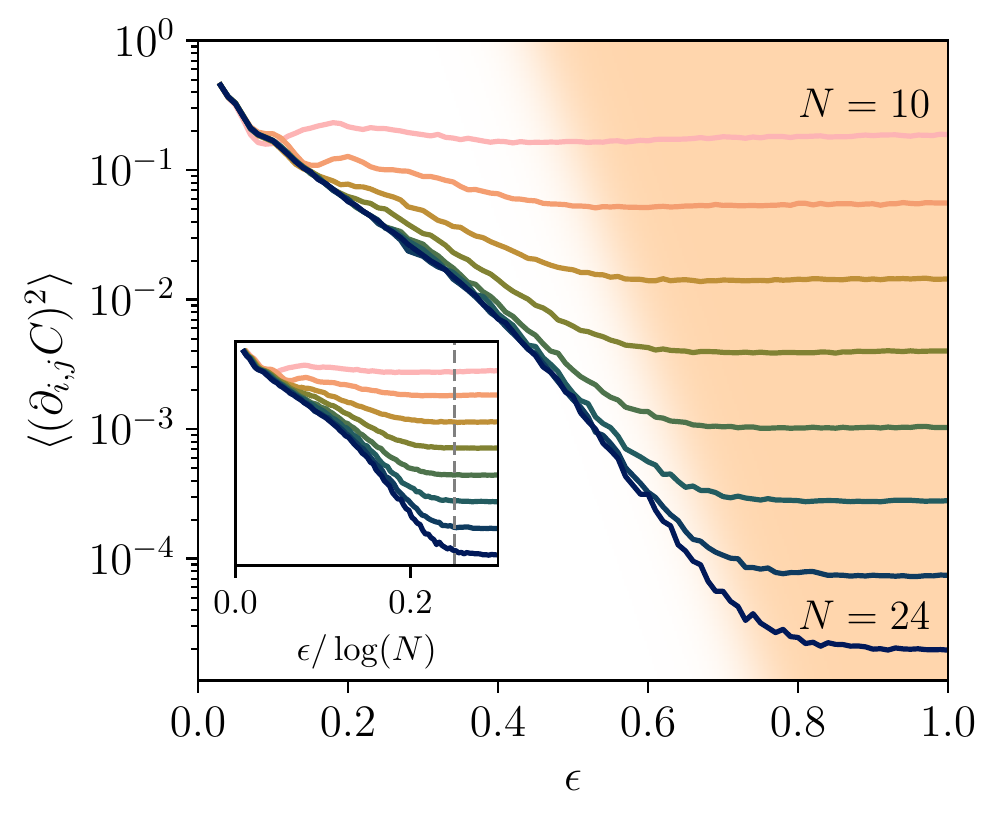}
    \caption{Averaged magnitudes of gradients $\langle (\partial_{i,j} C)^2 \rangle $ as a function of $\epsilon$. The HVA for the 1D XYZ model with $p=16$ is used. All parameters are samples from the uniform distribution, i.e., $\theta_{i,j} \sim \mathcal{U}_{[0, \epsilon]}$. We observe that there is a value $\epsilon_0(N)$ such that the gradient decays exponentially with $N$ only for $\epsilon > \epsilon_0(N)$ (colored region).
    Inset shows the same data but as a function of $\epsilon/\log(N)$.
    We see the gradient magnitudes saturate after the dashed vertical line, which suggests that $\epsilon_0(N) \propto \log N$.}
    \label{fig:gradient_scaling_eps}
\end{figure}

As non-exponential decaying gradients from the small constant initialization have not been clearly reported in previous studies, we explore these phenomena more closely here.
We compute the averaged squared gradients using the HVA for the 1D XYZ model with $p=16$ when the circuit parameters are sampled from $\mathcal{U}_{[0,\epsilon]}$ for different values of $N$ and $\epsilon$.
We plot the result as a function of $\epsilon$ in Fig.~\ref{fig:gradient_scaling_eps}.
The results show that the magnitudes of gradients saturate to an exponentially small value as $\epsilon$ increases, but the point it saturates, $\epsilon_0(N)$, also increases with $N$. For $\epsilon < \epsilon_0(N)$, we observe that the gradient does not decay exponentially.
This fact also confirms that there can be another parameter regime beyond the one we mainly considered in this paper, which is also free from barren plateaus.

To see how $\epsilon_0(N)$ scales with $N$, we plot the same data but as a function of $\epsilon/\log(N)$ (inset).
The plot shows that the gradient magnitudes saturate when $\epsilon/\log(N)$ is larger than a constant, which suggests that $\epsilon_0(N) \propto \log N$.
In general, we expect that there is a relation between $\Upsilon := \sum_{i,j} \theta_{i,j}$ (which is $\propto p\epsilon$ in this case) in the HVA for a $2$-local Hamiltonian and a random local circuit with depth $\propto \Upsilon$.
Such a connection explains the observed behavior as a 1D random local circuit requires its depth larger than $\Theta(\log(N))$ to show exponential decay of gradient magnitudes when the cost function is given by the expectation value of a local observable~\cite{cerezo2021cost}.

We still note that it is less clear whether a small constant parameter initialization [method (3)] gives the same quantitative behavior for the HVAs with more complex Hamiltonians (e.g., 1D $k$-local with $k \geq 3$ or defined in a high-dimensional lattice).
In contrast, we expect to have $\Theta(1)$ gradient magnitudes regardless of the dimension with our initialization method [method (2)].
As a detailed investigation of the relation between the HVA and a local random circuit is out of the scope of the current work, we leave it to future work.

\begin{figure}
    \includegraphics[width=0.95\linewidth]{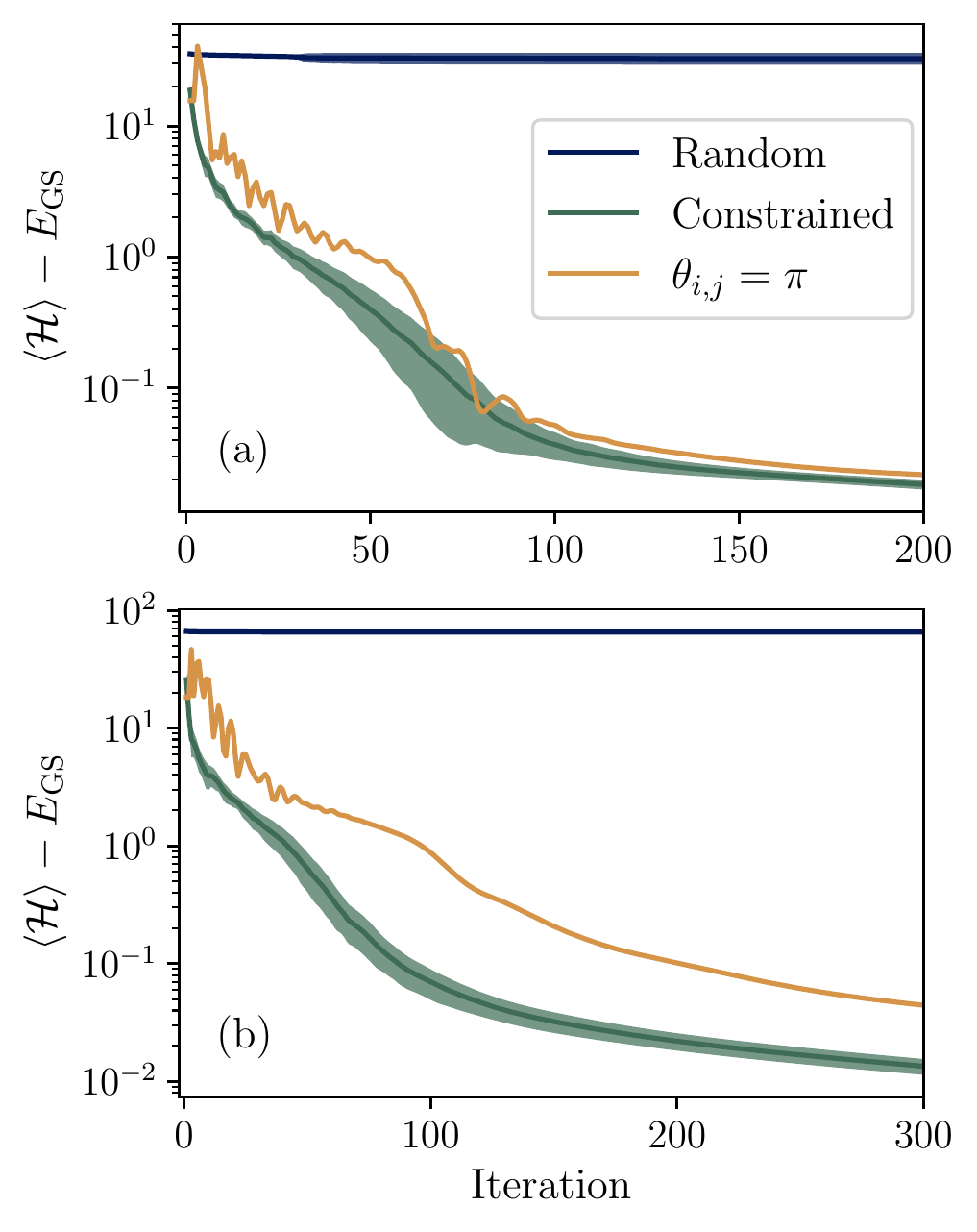}
    \caption{
    (a) Learning curves from different initialization schemes for the 1D Heisenberg model with $N=20$. We use the circuit ansatz with $p=20$ and the Adam optimizer with the learning rate $\alpha=0.025$. For each iteration, the curves for random and constrained initializations show the averaged results over $32$ different initial parameters. The shaded regions indicate one standard deviation ($[m-\sigma/2,m+\sigma/2]$). Note that results from the constant initialization ($\theta_{i,j}=\pi$) do not vary between instances as there is no randomness in the simulation.
    The ground state energy $E_{\rm GS}$ is obtained using the exact diagonalization.
    (b) The same result for the 2D Heisenberg model with $L_x \times L_y = 4 \times 6$ lattice. The circuit ansatz with $p=24$ and the Adam optimizer with learning rate $\alpha=0.005$ are used. Results for random and constrained initializations are averaged over $16$ different initial parameters.
    For optimization, we compute the gradient exactly (without shot noise).
    Shaded regions are barely visible for the random initialization.
    This is because the loss function, $\braket{\cal H}$, is not trained at all for most instances.
    Thus, the loss function preserves its initial value $\braket{\cal H} \approx 0$, which is from the fact that the circuit forms a 2-design for the random initialization.
    }
    \label{fig:vqe_heisenberg}
\end{figure}

\subsection{Full simulation of variation quantum eigensolver}
We now explore whether our initialization improves learning procedures by fully simulating a variational quantum eigensolver (VQE) using the Heisenberg model ($J_x = J_y = J_z = 1$).
We define the Hamiltonian expectation values as the cost function ($C = \braket{\psi(\pmb{\theta})|\mathcal{H}|\psi(\pmb{\theta})}$) 
and train the circuit using the Adam optimizer~\cite{kingma2014adam}.

We first simulate the VQEs using exact gradients. 
Quantum hardware cannot compute exact gradients, as each gradient component should be estimated from the measurement outcomes from shots.
However, classical quantum circuit simulators support multiple algorithms to obtain exact gradients.
For our simulation, we use the adjoint method~\cite{jones2020efficient} implemented in PennyLane~\cite{bergholm2018pennylane}.
We present learning curves from different parameter initialization methods for the one-dimensional lattices (with learning rate $\alpha = 0.025$ and the default values for hyperparameters $\beta_1=0.9$ and $\beta_2=0.999$) in Fig.~\ref{fig:vqe_heisenberg} (a), which shows that our initialization scheme outperforms other initialization schemes.
The completely random parameter initialization fails to find the ground state. This is an expected behavior from the presence of the barren plateaus.
On the other hand, initializing all parameters to $\pi$ ($\theta_{i,j} = \pi$ for all $i,j$), which is used in Ref.~\cite{wiersema2020exploring}, works but is subject to large initial fluctuations.
Generally speaking, such an initialization without randomness is prone to local minima~\cite{wessels1992avoiding}.
We also found a similar behavior for the 2D Heisenberg model, shown in Fig.~\ref{fig:vqe_heisenberg}(b).

\begin{figure}[t]
    \centering
    \includegraphics[width=0.95\linewidth]{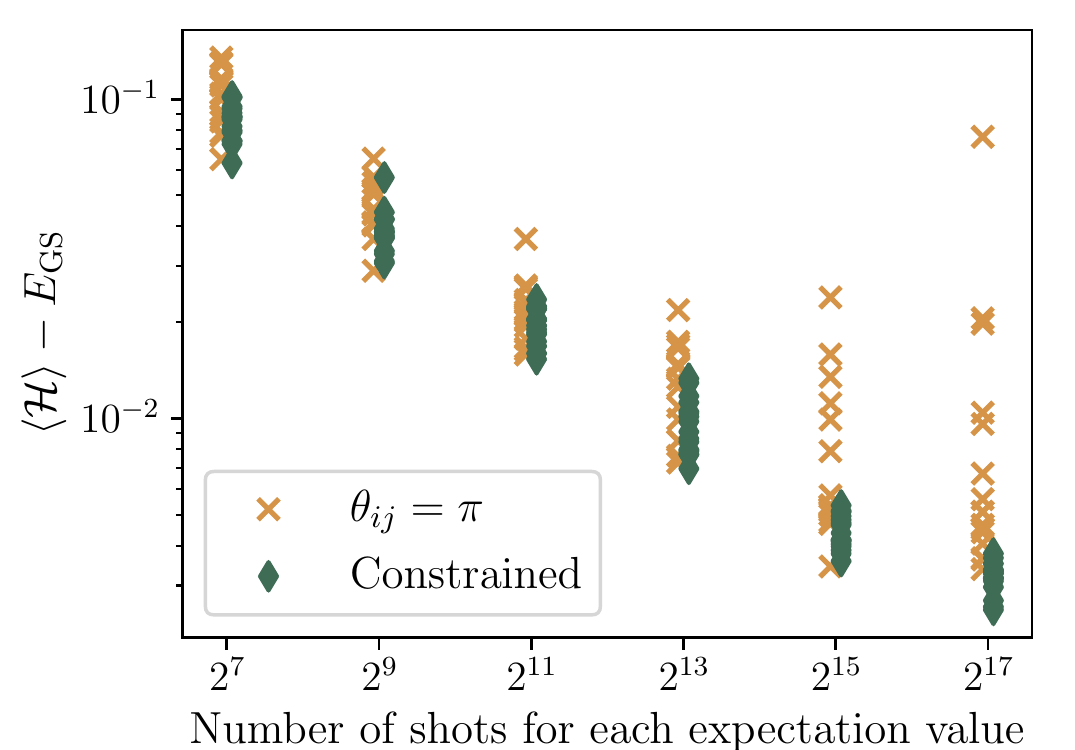}
    \caption{
        Converged energies from the VQE for the one-dimensional Heisenberg model with $N=p=16$ as a function of the number of shots. For each number of shots, $n_{\rm shot} \in [2^7, 2^9, 2^{11}, 2^{13}, 2^{15}, 2^{17}]$, we fully simulate the VQE $16$ times.
        The converged energies $\braket{\mathcal{H}}$ for each independent VQE instance are presented.
        For the initialization $\theta_{ij}=\pi$, the shot noise is the only source of the randomness.
        On the other hand, the initial parameters are also random when the constrained parameter initialization is used.
    }
    \label{fig:hei_1d_shots_converged}
\end{figure}

We next compare results from different initialization methods when a finite number of shots is used.
We solve the 1D Heisenberg model, but gradients are now estimated using $n_{\rm shot}$ shots.

The gradient with a finite number of shots can be obtained as follows. First, we introduce another PQC with the same shape as the HVA, Eq.~\eqref{eq:hva_xyz}, but all gates have different parameters. Such a PQC is written as
\begin{align*}
    &\ket{\psi(\pmb{\alpha}, \pmb{\beta}, \pmb{\gamma})} \\
    &=  \prod_{i=p}^1 e^{-i \sum_{\braket{a,b}}\gamma_{i,a,b} Z_a Z_b } e^{-i \sum_{\braket{a,b}} \beta_{i,a,b} Y_a Y_b } \\
    & \qquad \times e^{-i \sum_{\braket{a,b}} \alpha_{i,a,b} X_a X_b } \ket{\psi_0}. \numberthis \label{eq:hva_xyz_expand_params}
\end{align*}
Compared to Eq.~\eqref{eq:hva_xyz}, all gates now have independent parameters.
Next, we obtain each gradient component using the two-term parameter-shift rule~\cite{mitarai2018quantum,schuld2019evaluating}.
By defining $f(\pmb{\alpha}, \pmb{\beta}, \pmb{\gamma}) = \braket{\psi(\pmb{\alpha}, \pmb{\beta}, \pmb{\gamma})|\mathcal{H}|\psi(\pmb{\alpha}, \pmb{\beta}, \pmb{\gamma})}$, we obtain its gradient for $\alpha_{i,a,b}$ as follows:
\begin{align}
    \frac{\partial f }{\partial \alpha_{i,a,b}} &=  \frac{1}{2} \Bigl[ f\bigl(\pmb{\alpha} + \frac{\pi}{2} \pmb{\delta}_{i,a,b}, \pmb{\beta}, \pmb{\gamma} \bigr) \nonumber \\
    & \qquad - f\bigl(\pmb{\alpha} - \frac{\pi}{2}\pmb{\delta}_{i,a,b}, \pmb{\beta}, \pmb{\gamma} \bigr)\Bigr],
\end{align}
where $\pmb{\delta}_{i,a,b}$ is a vector components of which is $1$, if the index is $(i,a,b)$, or $0$, otherwise.
Gradient for $\beta$ and $\gamma$ also can be obtained similarly.

Shot noise is introduced when we estimate $f(\pmb{\alpha}, \pmb{\beta}, \pmb{\gamma})$. From
\begin{align}
    \braket{\mathcal{H}} = \sum_{\braket{a,b}} \braket{X_a X_b} + \braket{Y_a Y_b} + \braket{Z_a Z_b},
\end{align}
we can estimate $f=\braket{\psi(\pmb{\alpha}, \pmb{\beta}, \pmb{\gamma})|\mathcal{H}|\psi(\pmb{\alpha}, \pmb{\beta}, \pmb{\gamma})}$ using the samples of $\psi(\pmb{\alpha}, \pmb{\beta}, \pmb{\gamma})$ in the $X$, $Y$, and $Z$ bases.
For example, let us estimate $\braket{X_a X_b}$ using $n_{\rm shot}$ samples.
We first obtain bitstrings $\{x^{(1)}, \cdots, x^{(n_{\rm shot})}\}$ from the probability distribution $p(x)=|\braket{x|H^{\otimes N} | \psi(\pmb{\alpha}, \pmb{\beta}, \pmb{\gamma})}|^2$, where each $x^{(i)} \in \{0, 1\}^N$ is a bitstring with length $N$.
Then, each $\braket{X_aX_b}$ can be estimated using these samples by 
\begin{align}
    \braket{X_aX_b} \approx \frac{1}{n_{\rm shot}} \sum_{i=1}^{n_{\rm shot}} (1- 2 x^{(i)}_a) (1- x^{(i)}_b),
\end{align}
where $1-2x^{(i)}_a$ is from the fact that $X_a$ has the value $1$ if $x^{(i)}_a =0$, and $-1$ if $x^{(i)}_a =1$.
Note that we can use the same set of samples, $\{x^{(1)}, \cdots, x^{(n_{\rm shot})}\}$, for all $\braket{X_a X_b}$.
Thus, $f = \braket{\mathcal{H}}$ for each set of parameters is estimated using $3 n_{\rm shot}$ samples, and each gradient component is estimated using $6 n_{\rm shot}$ samples.

Finally, the gradient of the original HVA, Eq.~\eqref{eq:hva_xyz}, which shares the parameters between the gates, can be obtained by summing over the gradient components.
Namely, we have
\begin{align}
    \frac{\partial \braket{\psi(\pmb{\theta})|\mathcal{H}|\psi(\pmb{\theta})}}{\theta_{i,1}} = \sum_{\braket{a,b}} \frac{\partial f}{\partial \alpha_{i,a,b}},\\
    \frac{\partial \braket{\psi(\pmb{\theta})|\mathcal{H}|\psi(\pmb{\theta})}}{\theta_{i,2}} = \sum_{\braket{a,b}} \frac{\partial f}{\partial \beta_{i,a,b}},\\
    \frac{\partial \braket{\psi(\pmb{\theta})|\mathcal{H}|\psi(\pmb{\theta})}}{\theta_{i,3}} = \sum_{\braket{a,b}} \frac{\partial f}{\partial \alpha_{i,a,b}}.
\end{align}
Given that the circuit has $3Np$ gates in total, and each gradient component is estimated from $6 n_{\rm shot}$ samples, 
$18 N p n_{\rm shot}$ samples are used for each iteration to estimate all gradient components.

For the HVA with $N=p=16$ [see Eq.~\eqref{eq:hva_xyz}], we plot the converged energies after $10^3$ iterations from the VQE simulations as a function of $n_{\rm shot}  \in [2^7, 2^9, 2^{11}, 2^{13}, 2^{15}, 2^{17}]$ in Fig.~\ref{fig:hei_1d_shots_converged}.
While the results show that the best-converged energies from our constrained initialization scheme and $\theta_{ij}=\pi$ are similar, the deviations between instances from our initialization scheme are much smaller.
For example, the worst-performing instance for $n_{\rm shot}=2^{15}$ gives $\braket{\mathcal{H}} - E_{\rm GS} \approx 3 \times 10^{-2}$ when all parameters are initialized with $\pi$, but that from our initialization is $\approx 6 \times 10^{-3}$.

Our parameter constraint can also be imposed throughout the optimization steps (the ansatz itself).
When imposed on the ansatz, we can slightly change the cost function to ensure the parameters always follow the constraints. 
This can be simply done by assigning $\theta_{j,q} = T - \sum_{i=1}^{q-1} \theta_{j,i}$ and replacing the cost function $C$ with $\tilde{C}=[\prod_{i,j} \mathbf{1}(\theta_{j,i})] [\prod_j \mathbf{1}(T - \sum_{i=1}^{q-1} \theta_{j,i})] C$ where $\mathbf{1}(x)$ is the Heaviside step function. 
One can see this (piecewise differentiable) cost function (1) restricts all parameters to be larger than $0$, and (2) the sum of the parameters in each block is given by $T$ throughout the training.

Still, the constraint-imposed ansatz may not be useful for solving complex problems, as we require $pT$ to be small.
Even though the ansatz itself allows a large-depth circuit (e.g., $p = \Theta(N^2)$ and $T =1/N^3$), the Suzuki-Trotter decomposition~\cite{suzuki1991general} tells us that such a circuit always can be approximated by a short-depth circuit, i.e., there is another circuit with depth $d = \mathrm{poly}(pT) = \mathrm{poly}(1/N)$ that can express our constrained HVA with a small error. 
Using the notion of quantum circuit complexity~\cite{nielsen2005geometric,nielsen2006quantum,stanford2014complexity,haferkamp2022linear}, defined by the minimum number of two-qubit gates in any circuit that implements the given unitary,
we can say that this ansatz has a small approximate circuit complexity.

\begin{figure}[t]
    \centering
    \includegraphics[width=0.95\linewidth]{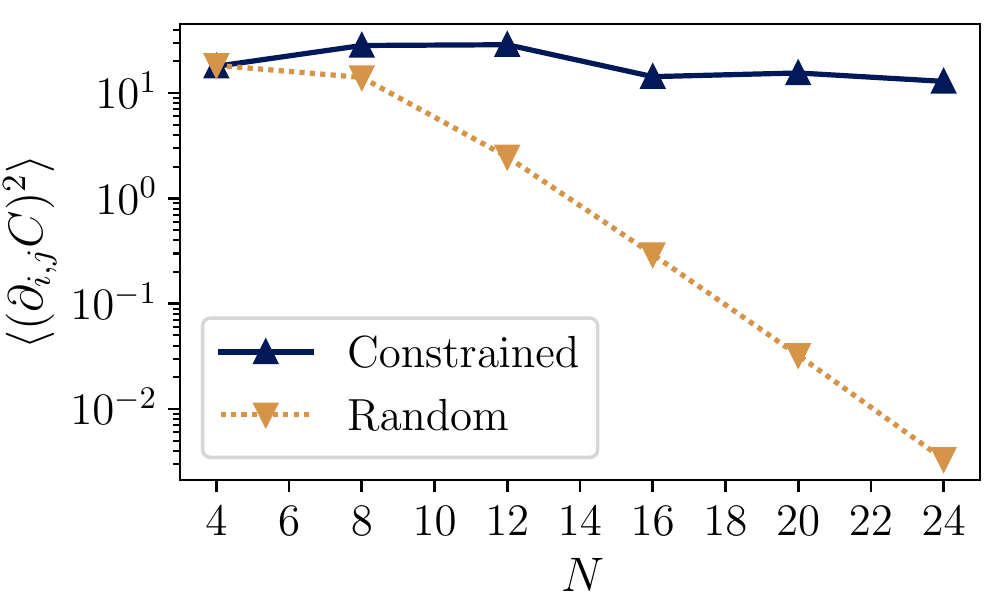}
    \caption{
    Scaling of the averaged squared gradients $(\partial_{i,j}C)^2$ from the repeated ansatz Eq.~\eqref{eq:repeat_ansatz} with (solid) and without (dotted) a parameter constraint.
    For the parameter-constrained ansatz, we sample parameters under the constraint $\theta_{i,1}+\theta_{i,2}+\theta_{i,3}=\pi/(2N)$.
    In contrast, $\theta_{i,j} \sim \mathcal{U}_{[0,2\pi]}$ is used for the ansatz without the constraint.
    The HVA is for the 1D XYZ model with $O=Y_0Y_1$ with $\tilde{p}=16$ and $r=N^2/4$. The results are averaged over $2^{10}$ random parameters.
    }
    \label{fig:hva_grad_repeat}
\end{figure}

\subsection{Long-time evolution with repeated parameters}
To overcome the problem that a simple parameter-constrained ansatz introduced in the previous subsection is not expressive enough, we propose another ansatz with better expressivity.
Our solution is to repeat the circuit multiple times instead of adding free parameters. In this case, the circuit is given as
\begin{align}
    U(\pmb{\theta}) = \Bigl[ \prod_{i=\tilde{p}}^1 e^{-iH^{(q)} \theta_{i,q}} \cdots e^{-iH^{(1)} \theta_{i,1}} \Bigr]^r \label{eq:repeat_ansatz}
\end{align}
with the constraint $\sum_j \theta_{i,j} = T$.
Thus, the circuit has a total of $\tilde{p}qr$ layers but only has $\tilde{p}q$ parameters.
This ansatz can be approximated by $e^{-iK (\tilde{p}rT)}$ for a local Hamiltonian $K$ with an error $\mathcal{O}(r(\tilde{p}T)^{n+2})$ when $\tilde{p}T$ is inverse polynomial with $N$ (i.e., $\tilde{p}T = \mathcal{O}(N^{-\gamma})$ for $\gamma > 0$).
This fact follows from Proposition~\ref{prop:effective_ham_fm} and $\Vert U_1^r - U_2^r \Vert \leq r \Vert U_1 - U_2 \Vert$ which holds for arbitrary $U_1$ and $U_2$~\footnote{This is from $\Vert U_1^r - U_2^r \Vert = \Vert U_1 U_1^{r-1} - U_1 U_2^{r-1} + U_1 U_2^{r-1} - U_2 U_2^{r-1} \Vert \leq \Vert U_1^{r-1} - U_2^{r-1} \Vert + \Vert U_1 - U_2 \Vert $, where we used $\Vert U_1\Vert = \Vert U_2^{r-1}\Vert = 1$.}.

We then further expect that the gradients scale polynomially with $N$ from Conjecture~\ref{conj:poly_decaying_random_local_ham} when $\tilde{p}rT$ is sufficiently large enough to equilibrate the system~\footnote{Precisely, one also require an ergodicity assumption that resulting Hamiltonians ($K$s) are uniformly distributed over the vector space for random parameters $\theta_{i,j}$.}.
We also numerically test the gradient scaling of this ansatz for $r=N^2/4$, $\tilde{p}=16$, and $T=\pi/(2N)$ using the HVA for the one-dimensional XYZ model in Fig.~\ref{fig:hva_grad_repeat}.
The plot shows that the gradient does not decay when the parameters are constrained, whereas it decays exponentially otherwise.

We now argue that the repeated ansatz of Eq.~\eqref{eq:repeat_ansatz} (1) can generate sufficiently complex unitary operators and (2) is useful for variational time evolution~\cite{li2017efficient,yuan2019theory,cirstoiu2020variational,lin2021real}.
The complexity of the circuit directly follows from the observation that the circuit approximates to $e^{-iK(\tilde{p}rT)}$ for a Hamiltonian $K$ with a large $\tilde{p}rT$.
It is commonly believed that simulating the long-time evolution of a general local Hamiltonian requires a large depth circuit (which is also formally conjectured in Refs.~\cite{brown2017quantum,brown2018second} in terms of quantum circuit complexity). 
Next, the given ansatz can express the time evolution of a given Hamiltonian $H = \sum_{j=1}^{\tilde{p}} \alpha_j H_j$.
Using the first-order Suzuki-Trotter decomposition, we can write 
\begin{align}
    e^{-iHt} = (e^{-iH t/r})^{r} \approx \Bigl[ \prod_{j=\tilde{p}}^1 e^{-i\alpha_j H_j t_0} \Bigr]^{r}
\end{align}
with $t_0 = t/r$ and an error $\mathcal{O}(N r t_0^2)$.
Thus the approximation has an error $\mathcal{O}(1/N)$ if we use $t_0 = \Theta(1/N)$.
As the right-hand side is nothing but Eq.~\eqref{eq:repeat_ansatz} with $T = \sum_j \alpha_j t_0$,
the ansatz with $T = \Theta(1/N)$ can approximate $e^{-iHt}$.

\section{Conclusion}
\label{sec:conclusion}
We studied the scaling behaviors of the gradients in the hamiltonian variational ansatz (HVA) and showed that adding a simple parameter constraint to the ansatz results in large gradients.
We demonstrated that the gradient magnitudes scale as $\Theta(1)$ when the circuit is given by short-time evolution and $1/\mathrm{poly}(N)$ when it is given by long-time evolution.
For the short-time regime, we provided a rigorous proof based on the rate of the gradient evolution, while we showed numerical evidence based on quantum thermalization~\cite{deutsch1991quantum,srednicki1994chaos,rigol2008thermalization,huang2019finite} for long-time evolution.
We then found the parameter constraints for which the HVA can be approximated by short-time as well as long-time evolution under a local Hamiltonian.
We further supported our arguments with extensive numerics for up to $28$ qubits, which also consistently showed the correctness of our arguments.

For long-time evolution, our argument is based on the fact that the dynamics generated by thermalizing Hamiltonians are more restricted than unitary 2-designs.
Albeit typical Hamiltonians thermalize~\cite{gogolin2016equilibration},
there are two other important classes of Hamiltonians with different dynamic properties: integrable and many-body localized systems.
In contrast to thermalizing systems where information of initial states spreads out through the Hilbert space (but within a subspace preserving the energy),
initial information on integrable and many-body localized systems can be easily accessed by simple operators at any time, i.e., their dynamics are even more restrictive than thermalizing Hamiltonians.
Given this interesting property, we expect that there could be a different parameter condition that parameterized quantum circuits approximate to integrable or many-body localized systems, which are also free from barren plateaus.

For example, it is known that the out-of-time correlator of many-body localized systems does not decay exponentially with the system size for particular choices of observables and initial states (see, e.g., Refs.~\cite{chen2016universal,fan2017out,lee2019typical}).
Following the arguments in Sec.~\ref{sec:grad_scaling_long_time}, we hope to find a class of parameterized quantum circuits that approximate to a many-body localized system and have large gradients.
On the other hand, Ref.~\cite{larocca2022diagnosing} showed that the dynamic Lie algebra $\mathcal{G}$ generated by the HVA for the XXZ model ($J_x = J_y$) can have a small dimension, i.e., the Lie algebra $\langle i\sum_i(X_i X_{i+1} + Y_i Y_{i+1}), i\sum_i Z_i Z_{i+1}\rangle_{\rm Lie}$, has a small dimension.
As the XXZ model is solvable by the Bethe ansatz (thus integrable), we believe that a fundamental connection exists between the low-dimensional dynamical Lie algebra, the integrability of the system, and large gradients.
Such a connection might be studied in future work.

This paper did not consider incoherent noise, which prevails in noisy quantum devices.
This type of noise is known to be another source of barren plateaus~\cite{wang2021noise}.
When the circuit is short enough, we believe that our initialization schemes can help compensate for the vanishing gradients from the incoherent noises.
Still, how the strength of the noise, the circuit depth, and the initialization schemes interplay in general is another big question that requires a separate study.
Since this is important for practical applications of variational quantum algorithms, further research on the effect of incoherent noises is necessary.

\section*{Acknowledgements}
The authors thank Modjtaba Shokrian Zini for helpful discussions and David Wierichs, Maria Schuld, and Joseph Bowles for valuable comments.
This research used resources of the National Energy Research Scientific Computing Center, a DOE Office of Science User Facility supported by the Office of Science of the U.S. Department of Energy under Contract No. DE-AC02-05CH11231 using NERSC award NERSC DDR-ERCAP0025705.
Numerical simulations were performed using \textsc{PennyLane}~\cite{bergholm2018pennylane} software package with \textsc{Lightning}~\cite{Lightning} and \textsc{Lightning-GPU}~\cite{Lightning-GPU} plugins.
The source code used for simulations is available in Ref.~\cite{park2023_github_repo}.

\medskip
\bibliographystyle{quantum}
\bibliography{references.bib}

%\clearpage

\onecolumngrid
\appendix

\setcounter{equation}{0}%
\renewcommand{\theequation}{\thesection.\arabic{equation}}%
\setcounter{theorem}{0}%
\renewcommand{\thetheorem}{\thesection.\arabic{theorem}}%
\setcounter{corollary}{0}%
\renewcommand{\thecorollary}{\thesection.\arabic{corollary}}%

\section{Big-$O$ and related notations}\label{app:big-O}
In the main text, we have used big-$O$ and related notations.
This appendix formally defines these notations as follows:
\begin{itemize}
    \item $f(n) = \mathcal{O}(g(n))$ if there exist $n_0 \in \mathbb{N}^+$ and $c \in \mathbb{R}^+$ such that $f(n) \leq c g(n)$ for all $n \geq n_0$.
    \item $f(n) = \Omega(g(n))$ if there exist $N_0 \in \mathbb{N}^+$ and $c \in \mathbb{R}^+$ such that $f(n) \geq c g(n)$ for all $n \geq n_0$.
    \item $f(n) = \Theta(g(n))$ if $f(n) = \mathcal{O}(g(n))$ and $f(n) = \Omega(g(n))$.
\end{itemize}

\section{Long-time average of the variance of gradients in the Hamiltonian dynamics}
\label{app:long-time-avg-grad-ham-evol}

In this appendix, we prove Proposition~\ref{prop:lte_grad}, which gives the lower bound of a long-time average of the squared gradient for a Hamiltonian evolution. Direct computation of $\langle \psi_0 | i [G, O(t)]|\psi_0 \rangle^2$ gives

\begin{align}
	\langle \psi_0 | i[G, O(t)]|\psi_0  \rangle^2 &= -\Bigl[ \braket{\psi_0 | G e^{iHt} O e^{-iHt} | \psi_0} - \braket{\psi_0 |  e^{iHt} O e^{-iHt} G| \psi_0}\Bigr]^2 \\
	&= -\Bigl[ \sum_{ijk} C_i^* G_{ij} e^{i(E_j - E_k)t} O_{jk} C_k - \sum_{lmn} C_l^* e^{i(E_l - E_m)t} O_{lm} G_{mn} C_n\Bigr]^2\\
	&= -\sum_{ijki'j'k'}C_i^* G_{ij} e^{i(E_j - E_k)t} O_{jk} C_k C_{i'}^* G_{i'j'} e^{i(E_{j'} - E_{k'})t} O_{j'k'} C_{k'} \nonumber \\
	&\qquad -\sum_{lmnl'm'n'}C_l^* e^{i(E_l - E_m)t} O_{lm} G_{mn} C_n C_{l'}^* e^{i(E_{l'} - E_{m'})t} O_{l'm'} G_{m'n'} C_{n'} \nonumber \\
	&\qquad + 2 \sum_{ijklmn}C_i^* G_{ij} e^{i(E_j - E_k)t} O_{jk} C_k C_l^* e^{i(E_l - E_m)t} O_{lm} G_{mn} C_n \label{eq:time_average_gradients}
\end{align}
where $C_i = \braket{E_i|\psi_0}$, $G_{ij} = \braket{E_i | G | E_j}$, and $O_{jk} = \braket{E_j | O | E_k}$.

We assume that the Hamiltonian $H$ satisfies the non-degenerate energy-gap condition, i.e.,
\begin{align}
    E_i - E_j = E_k - E_l \text{\quad iff \quad} \begin{cases}
        &i = k \text{ and } j = l \\
        &i = j \text{ and } k = l
    \end{cases}.
\end{align}
Under this condition, averaging Eq.~\eqref{eq:time_average_gradients} over time yields
\begin{align}
	\lim_{T \rightarrow \infty} \frac{1}{T}\int_{0}^{T} dt \Bigl\{ - \langle [G, O(t)] \rangle^2 \Bigr\} &= 2 \sum_{ijkn} C_i^* G_{ij} O_{jk} |C_k|^2 O_{kj} G_{jn} C_n - \sum_{lmnn'} C_l^* O_{lm} G_{mn} C_n C_m^* O_{ml} G_{ln'} C_{n'}  \nonumber \\
	&\qquad- \sum_{ijkl} C_i^* G_{ij} O_{jk} C_k C_l^* G_{lk}O_{kj} C_j - \braket{\psi| [G, \tilde{O}] |\psi}^2 \\
	&= F_H(\psi, G, O) - \braket{\psi| [G, \tilde{O}] |\psi}^2  \geq F_H(\psi, G, O) \label{eq:equilibration_grad_ineq}
\end{align}
where $\tilde{O} = \sum_j O_{jj}\ket{E_j}\bra{E_j}$.
As $\braket{\psi | [G, \tilde{O}]|\psi}$ is purely imaginary, we obtain the last inequality.
The inequality in the main text is then obtained by changing the summation indices.

We also note that the eigenstate thermalization hypothesis~\cite{deutsch1991quantum,srednicki1994chaos,rigol2008thermalization} suggests that the last term, $\braket{\psi| [G, \tilde{O}] |\psi}^2$, is exponentially small in $N$.

\section{Generating random Hamiltonians}\label{app:gen_rand_ham}
In the main text, we numerically observed the scaling behaviors of gradient magnitudes using random Hamiltonians.
Here, we describe detailed steps to generate such random $k$-local Hamiltonians.
For a given $k$, we first create a set of terms $S = \{\sigma^{1}_{a_1}\sigma^{2}_{a_2}\cdots\sigma^{k}_{a_k}\}$ where each $\sigma^{i}_{a_i}$ is one of the the Pauli matrices at site $i$ ($\{I_i, X_i, Y_i, Z_i\}$) where $a_k \in \{0,1,2,3\}$ for all $k$.
We then remove terms duplicated under translation.
For example, as $X \otimes I \otimes I$, $I \otimes X \otimes I$, and $I \otimes I \otimes X$ generate the same terms under translation, we only keep one of them.
We then construct a random Hamiltonian $H=\sum_{s \in S} c_s \sum_{n=1}^N \mathsf{T}^n s$, where the coefficients $c_s$ are samples from the normal distribution $\mathcal{N}(0, 1)$ and $\mathsf{T}$ is the translation operator ($\mathsf{T} \sigma^{i}_a = \sigma^{i+1}_a$).
We also generate random time-reversal symmetric Hamiltonians ($H^* = H$) using the same method but removing purely imaginary operators (that contain odd numbers of Pauli-$Y$s) from $S$.

\section{Approximation of the HVA from the truncated Floquet-Magnus expansion}
\label{app:approximating_hva}

In this appendix, we prove Proposition~\ref{prop:effective_ham_fm} using the truncated Floquet-Magnus (FM) expansion.
The FM expansion~\cite{magnus1954exponential} provides a time-independent effective Hamiltonian for a unitary evolution from a time-dependent Hamiltonian.
While this expansion diverges for a general many-body Hamiltonian,
recent works~\cite{kuwahara2016floquet,abanin2017rigorous} have shown that we can still use the expansion after truncating high-order terms.

\subsection{Truncated Floquet-Magnus expansion}
Let us introduce the truncated FM expansion following the notation in Ref.~\cite{kuwahara2016floquet}.
We consider a system defined on a lattice with $N$ spins where each spin is labeled by $i=1$ to $N$.
The set of all spins is denoted by $\Lambda = \{1,\cdots,N\}$.
We consider a time-dependent Hamiltonian $H(t)$ defined for $0 \leq t \leq \tau$. 
We decompose the Hamiltonian into $H(t) = H_0 + V(t)$ where $H_0$ is the time-independent part and $V(t)$ is the remaining time-dependent part.
Both parts have at most $k$-body interactions (we do not impose geometric locality yet). We then write the Hamiltonian terms as 
\begin{align}
    H_0 = \sum_{|X| \leq k } h_X, \qquad V(t) = \sum_{|X| \leq k} v_X(t),
\end{align}
where $X$ is all possible subsets of $\Lambda$ and $|X|$ is the number of elements in the set.

We introduce a parameter $J$ that upper bounds local interaction strength and some additional parameters for the expansion:
\begin{align}
    \sum_{X: X \ni i} (\Vert h_X \Vert + \Vert v_X(t) \Vert ) \leq J \quad \forall i \in \Lambda,\qquad V_0 := \sum_{|X| \leq k} \frac{1}{\tau} \int_0^\tau \Vert v_X(t) \Vert dt, \qquad \lambda := 2k J. \label{eq:fm_expansion_params}
\end{align}

Under this setting, we are interested in the Floquet Hamiltonian $H_F$ defined as
\begin{align}
    e^{-iH_F \tau} := \mathcal{T}[e^{-i\int_0^\tau H(t) dt}] ,
\end{align}
where $\mathcal{T}[\cdot]$ is the time-ordering operator.
One can expand the Floquet Hamiltonian as $H_F = \sum_{n=0}^\infty \tau^n \Omega_n$ where the terms $\{\Omega_n\}_{n=0}^\infty$ are given by the FM expansion as follows:
\begin{align}
    \Omega_n &= \frac{1}{(n+1)^2} \sum_{\sigma \in S_{n+1}} (-1)^{n-\omega(\sigma)} \frac{\omega(\sigma)!(n-\omega(\sigma))!}{n!} \nonumber \\
    &\qquad \times \frac{1}{i^n \tau^{n+1}} \int_0^{\tau} dt_{n+1}\cdots \int_0^{t_3} dt_{2} \int_0^{t_2} dt_{1} [H(t_{\sigma(n+1))},[H(t_{\sigma(n)}),\cdots,[H(t_{\sigma(2)}),H(t_{\sigma(1)})]\cdots ]], \label{eq:fm_omega}
\end{align}
where $S_{n+1}$ is the permutation group on $n+1$ letters, $\omega(\sigma) = \sum_{i=1}^n \mathbf{1}[\sigma(i+1)-\sigma(i)]$, and $\mathbf{1}(x)$ is the Heaviside step function.
In our setup, where the Hamiltonian terms have at most $k$-body interactions, $\Omega_n$ has at most $(n+1)k$-body interactions.
We further have the following upper bound for $\Omega_n$ (Lemma~1 in Ref.~\cite{kuwahara2016floquet}):
\begin{align}
    \Vert \Omega_n \Vert \leq \frac{2V_0 \lambda^n}{(n+1)^2} n! =: \overline{\Omega}_n.
\end{align}

One can see that the convergence condition $\Vert \Omega_{n+1} \Vert \tau^{n+1} < \Vert \Omega_n \Vert \tau^n$ only holds up to $n \approx(\lambda \tau)^{-1}$.
Indeed, the FM expansion diverges for a many-body Hamiltonian unless $\tau$ also scales with $N$.
Even though this fact suggests that the FM expansion might not be useful,
it turned out that one can still use a truncated series $H_F^{(n)} := \sum_{m=0}^n \tau^m \Omega_m$ to describe long-time dynamics accurately:

\begin{theorem}[Theorem~1 in Ref.~\cite{kuwahara2016floquet}]
With our parameters in Eq.~\eqref{eq:fm_expansion_params} and additional condition $\tau \leq 1/(4\lambda)$, the time evolution under the time-dependent Hamiltonian $H(t) = H_0 + V(t)$ is close to that generated by the truncated Floquet Hamiltonian $H_F^{(n_0)} = \sum_{m=0}^{n_0} \Omega_m \tau^m$ with
\begin{align}
    n_0 := \left\lfloor \frac{1}{16 \lambda \tau}  \right\rfloor,
\end{align}
in the sense that
\begin{align}
    \Vert e^{-iH_F\tau} - e^{-i H_F^{(n_0)}\tau} \Vert \leq 6 V_0 \tau 2^{-n_0}.
\end{align}
\end{theorem}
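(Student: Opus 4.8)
The plan is to prove the bound by comparing unitary \emph{propagators} rather than the (merely asymptotic) generators. The point to emphasize first is that $e^{-iH_F\tau}$ denotes the \emph{exact} time-ordered propagator $U(\tau)=\mathcal{T}[e^{-i\int_0^\tau H(t)\,dt}]$, whereas the formal series $\sum_n\tau^n\Omega_n$ defining $H_F$ diverges for a generic many-body Hamiltonian. Consequently the truncation error cannot be controlled by naively bounding the discarded tail $\sum_{m>n_0}\tau^m\Omega_m$, whose norm is unbounded; the whole difficulty is to convert the per-order estimate $\Vert\Omega_m\Vert\le\overline{\Omega}_m$ of Lemma~1 (which I may assume, as it is stated just above) into a statement about the distance between the two unitaries $U(\tau)$ and $e^{-iH_F^{(n_0)}\tau}$.

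The strategy I would use is to iteratively rotate away the time dependence. I would construct a sequence of $\tau$-periodic frame transformations $Q_1(t),\dots,Q_{n_0}(t)$ with $Q_m(0)=Q_m(\tau)=\mathds{1}$, each chosen (by solving, at each step, for the $Q_m$ that cancels the order-$m$ oscillating part of the rotating-frame Hamiltonian) so that, in the frame defined by $Q^{(m)}=Q_m\cdots Q_1$, the Hamiltonian takes the form $H_0^{(m)}+V^{(m)}(t)$ with a time-independent part $H_0^{(m)}$ and a residual drive $V^{(m)}(t)$ whose leading time dependence sits at order $\tau^{m}$. The standard Floquet--Magnus bookkeeping guarantees that the accumulated time-independent generator reproduces the truncated series, $H_0^{(n_0)}=H_F^{(n_0)}=\sum_{m=0}^{n_0}\tau^m\Omega_m$. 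Because each $Q_m$ is periodic, the stroboscopic propagator is frame-invariant, so $U(\tau)$ equals the rotating-frame propagator $\mathcal{T}[e^{-i\int_0^\tau(H_0^{(n_0)}+V^{(n_0)})}]$.

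With this in hand the endgame is Duhamel's inequality: for any two drives, $\Vert\mathcal{T}[e^{-i\int_0^\tau(A+B)}]-\mathcal{T}[e^{-i\int_0^\tau A}]\Vert\le\int_0^\tau\Vert B(t)\Vert\,dt$. Applying it with $A=H_0^{(n_0)}$ and $B=V^{(n_0)}$ gives $\Vert U(\tau)-e^{-iH_F^{(n_0)}\tau}\Vert\le\int_0^\tau\Vert V^{(n_0)}(t)\Vert\,dt$, so it remains to bound the residual drive. This is where the $k$-locality and the interaction bound $J$ enter: each conjugation step produces nested commutators whose support grows by at most $k$ sites and whose operator norm therefore inflates only like $n!\,\lambda^n$ with $\lambda=2kJ$, which is precisely the growth encoded in $\overline{\Omega}_n=2V_0\lambda^n n!/(n+1)^2$. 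I would show that, schematically, $\int_0^\tau\Vert V^{(n_0)}\Vert\,dt\lesssim\sum_{m\ge n_0}\tau^{m+1}\overline{\Omega}_m$, dominated by its first term $\tau^{n_0+1}\overline{\Omega}_{n_0}$. The choice $n_0=\lfloor 1/(16\lambda\tau)\rfloor$ (together with $\tau\le 1/(4\lambda)$) keeps us strictly below the optimal truncation order $\sim 1/(\lambda\tau)$, so that for $m\ge n_0$ the terms $\tau^m\overline{\Omega}_m$ still decay geometrically; summing the geometric series then collapses the bound to $6V_0\tau\,2^{-n_0}$, the generous constant $16$ being what forces the geometric ratio well below $1$.

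The main obstacle is the combination of the frame-transformation construction with the sharp combinatorial control of the nested-commutator growth. Two delicate points require care: (i) verifying that the iteratively generated time-independent part really is the truncated FM series and not some other effective Hamiltonian — this hinges on the $Q_m(0)=\mathds{1}$ boundary condition, which selects the Floquet--Magnus rather than the van~Vleck convention; and (ii) tracking \emph{simultaneously} the proliferation of terms and the spreading of operator supports so that the per-step norm growth is no worse than the factorial-over-geometric profile of $\overline{\Omega}_n$. Matching the explicit constants without slack is exactly the technical content of Lemma~1 and Theorem~1 of Ref.~\cite{kuwahara2016floquet}; I would invoke Lemma~1 as a black box and concentrate the new work on the propagator-comparison (Duhamel) step and the geometric summation that produces the final $6V_0\tau\,2^{-n_0}$.
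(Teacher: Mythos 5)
You should note at the outset that the paper contains no proof of this statement: it is imported verbatim as Theorem~1 of Ref.~\cite{kuwahara2016floquet}, so the relevant comparison is with that source. Kuwahara et al.\ do not use your iterated rotating-frame construction (that renormalization scheme is due to Abanin, De Roeck, Ho, and Huveneers); they work directly with the truncated Magnus generator $\Omega^{(n_0)}(t)=\sum_{m\leq n_0}\Omega_m(t)\,t^m$, show that $e^{-i\Omega^{(n_0)}(t)}$ satisfies the Schr\"{o}dinger equation up to a defect whose norm is controlled by finitely many FM-type terms at order $n_0$, and conclude by the same Duhamel inequality you invoke. Taking the other route would be legitimate, but as written your sketch has two genuine gaps, one of them fatal.

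The fatal step is the claimed bound $\int_0^\tau\Vert V^{(n_0)}(t)\Vert\,dt\lesssim\sum_{m\geq n_0}\tau^{m+1}\overline{\Omega}_m$ followed by ``summing the geometric series.'' With $\overline{\Omega}_m=2V_0\lambda^m m!/(m+1)^2$, the ratio of consecutive terms is $\approx\lambda\tau\,m$, which exceeds $1$ once $m\gtrsim 1/(\lambda\tau)=16\,n_0$: the tail series \emph{diverges}, and this divergence is exactly why the FM series itself diverges and why the theorem must truncate. The choice $n_0=\lfloor 1/(16\lambda\tau)\rfloor$ makes the terms decay only in the range $m\leq n_0$ (which is what the paper's Corollary~\ref{cor:truncated_fm} exploits, where the sum $\sum_{m=n+1}^{n_0}$ is finite); it does nothing for the infinite tail you propose to sum. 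The factor $2^{-n_0}$ must instead come from a \emph{single} order: via Stirling, $n_0!\,(\lambda\tau)^{n_0}\leq(n_0\lambda\tau/e)^{n_0}\leq(16e)^{-n_0}<2^{-n_0}$ applied to the defect of the truncated Magnus generator (Kuwahara et al.), or, in your scheme, an inductive estimate showing each rotation at step $m\leq n_0$ halves the local norm of the residual drive, giving $\Vert V^{(n_0)}\Vert\lesssim V_0\,2^{-n_0}$ directly---a mechanism your proposal does not supply. The second gap is the asserted identity $H_0^{(n_0)}=H_F^{(n_0)}$: each step-$m$ rotation feeds back into the static part at all orders above $m$, so the iterated static Hamiltonian agrees with the truncated FM series only up to $\mathcal{O}(\tau^{n_0+1})$ corrections, not exactly. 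Since the theorem compares against precisely $e^{-iH_F^{(n_0)}\tau}$, you would need an additional error term $\tau\Vert H_0^{(n_0)}-H_F^{(n_0)}\Vert$, which you flag as ``delicate'' but neither state nor control.
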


However, as $n_0$ increases with $N$, if $\tau \sim N^{-\alpha}$ for a positive $\alpha$, the interaction range of $H_F^{(n_0)}$ increases with $N$.
As we want strict locality in our Hamiltonian (it must be $k'$-local for a constant $k'$), a bound for $H_F^{(n)}$ for a fixed $n$ should be useful, which is provided by the following Corollary.

\begin{corollary}[Corollary~1 in Ref.~\cite{kuwahara2016floquet}] \label{cor:truncated_fm}
Under the same condition, we have
\begin{align}
    \Vert e^{-iH_F \tau} - e^{-i H_F^{(n)}\tau} \Vert \leq 6V_0 \tau 2^{-n_0} + \overline{\Omega}_{n+1} \tau^{n+2},
\end{align}
where $H_F^{(n)} = \sum_{m=0}^n \Omega_m \tau^m$ for arbitrary $n \leq n_0$.
\end{corollary}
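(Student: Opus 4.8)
The plan is to obtain the Corollary from the Theorem by inserting the maximal truncation $H_F^{(n_0)}$ as an intermediate point and bounding the two resulting pieces separately. First I would write the triangle inequality
\begin{align}
\Vert e^{-iH_F\tau} - e^{-iH_F^{(n)}\tau}\Vert \leq \Vert e^{-iH_F\tau} - e^{-iH_F^{(n_0)}\tau}\Vert + \Vert e^{-iH_F^{(n_0)}\tau} - e^{-iH_F^{(n)}\tau}\Vert ,
\end{align}
whose first term is exactly the quantity controlled by the Theorem, giving $6V_0\tau 2^{-n_0}$ and thereby reproducing the first summand of the claimed bound. This step uses only the hypotheses already in force, namely $\tau \leq 1/(4\lambda)$ and $n_0 = \lfloor 1/(16\lambda\tau)\rfloor$.

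For the second term I would compare two true matrix exponentials. Since every Floquet--Magnus order $\Omega_m$ is Hermitian, both $H_F^{(n)} = \sum_{m=0}^n \Omega_m\tau^m$ and $H_F^{(n_0)}$ are Hermitian, so I can invoke the standard Lipschitz estimate $\Vert e^{-iA} - e^{-iB}\Vert \leq \Vert A - B\Vert$ for Hermitian $A,B$ (proved in one line by differentiating $e^{isA}e^{-isB}$ and integrating in $s$). Applying it with $A = H_F^{(n_0)}\tau$ and $B = H_F^{(n)}\tau$ reduces the task to estimating the discarded tail of the series,
\begin{align}
\Vert e^{-iH_F^{(n_0)}\tau} - e^{-iH_F^{(n)}\tau}\Vert \leq \Bigl\Vert \sum_{m=n+1}^{n_0}\Omega_m\tau^{m+1}\Bigr\Vert \leq \sum_{m=n+1}^{n_0}\Vert\Omega_m\Vert\,\tau^{m+1} \leq \sum_{m=n+1}^{n_0}\overline\Omega_m\,\tau^{m+1},
\end{align}
where the last inequality is just the per-order bound $\Vert\Omega_m\Vert \leq \overline\Omega_m$ already at hand.

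The remaining and most delicate step is to show this finite tail is governed by its leading term $\overline\Omega_{n+1}\tau^{n+2}$. I would compute the ratio of consecutive summands,
\begin{align}
\frac{\overline\Omega_{m+1}\tau^{m+2}}{\overline\Omega_m\tau^{m+1}} = \lambda\tau\,(m+1)\,\frac{(m+1)^2}{(m+2)^2},
\end{align}
and use $m+1 \leq n_0 \leq 1/(16\lambda\tau)$ to bound it by $\tfrac{1}{16}$ uniformly on the window $n+1 \leq m \leq n_0$. The summands therefore decay geometrically, so the tail is at most $\overline\Omega_{n+1}\tau^{n+2}\sum_{j\geq 0}(1/16)^j = \tfrac{16}{15}\,\overline\Omega_{n+1}\tau^{n+2}$, which recovers the second summand of the stated bound up to a factor arbitrarily close to one. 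The main obstacle I expect is exactly this bookkeeping: confirming that the factorial growth inside $\overline\Omega_m$ is overwhelmed by $\tau^{m}$ across the entire truncation window $m \leq n_0$, so that the neglected series is dominated by its first term; the clean coefficient in the Corollary then follows either from this geometric slack being absorbed or from the slightly sharper remainder accounting internal to the Theorem's proof. Everything else is the triangle inequality and the Lipschitz property of the exponential map.
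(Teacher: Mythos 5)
Your derivation is sound, and it is worth noting at the outset that the paper itself offers no proof of this statement: it is imported verbatim as Corollary~1 of Ref.~\cite{kuwahara2016floquet}, so yours is the only argument on the table. The structure you use — triangle inequality through the maximally truncated evolution $e^{-iH_F^{(n_0)}\tau}$, the theorem for the first piece, the Lipschitz bound $\Vert e^{-iA}-e^{-iB}\Vert \leq \Vert A-B\Vert$ (valid here since each $\Omega_m$ is Hermitian, the $1/i^m$ prefactor compensating the anti-Hermitian character of the nested commutators), and then the term-by-term bound $\Vert\Omega_m\Vert\leq\overline{\Omega}_m$ on the discarded tail — is exactly the natural route, and your ratio computation is correct: for $m+1\leq n_0\leq 1/(16\lambda\tau)$ one has $\lambda\tau(m+1)(m+1)^2/(m+2)^2\leq 1/16$, so the tail is geometric (and the degenerate case $n=n_0$, where the tail sum is empty, is handled automatically). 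The one honest discrepancy, which you flag yourself, is the constant: your argument yields $\frac{16}{15}\overline{\Omega}_{n+1}\tau^{n+2}$ for the second summand, and contrary to your phrasing this factor is a fixed $16/15$, not ``arbitrarily close to one'' — the worst case sits at intermediate $n$, and no refinement of the geometric-ratio bookkeeping removes it; the coefficient~$1$ in the stated corollary comes from the sharper remainder accounting internal to the reference's proof of the theorem, which bounds $\Vert e^{-iH_F\tau}-e^{-iH_F^{(n)}\tau}\Vert$ directly for each $n\leq n_0$ rather than detouring through $n_0$. For every use made of the corollary in this paper this slack is immaterial: Proposition~\ref{prop:effective_ham_fm} and Theorem~\ref{thm:small_parameter_large_grad} need only the stated scaling in $\tau$ and $N$, and even the explicit constants in Appendix~\ref{app:prove_small_parameter_large_grad} (e.g., $\beta(c)$) would merely pick up a factor of $16/15$. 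So: correct modulo a benign constant, with the caveat that, strictly speaking, the coefficient-one bound as quoted is not recovered by this triangle-inequality route alone.
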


\subsection{Approximating the HVA}
We now consider the HVA given by 
\begin{align}
    U=\prod_{i=p}^1 e^{-iH^{(q)} \theta_{i,q}} \cdots e^{-iH^{(1)} \theta_{i,1}}
\end{align}
where we assume that each $H^{(j)}$ is the sum of commuting Pauli strings (products of Pauli operators) acting on at most $k$ geometrically local sites.
For example, $\sum_{i=1}^N X_i X_{i+1}$ from the HVA for the transverse-field Ising model satisfies this (both for the periodic and open boundary conditions) with $k=2$.

We now interpret the HVA as a time-dependent Hamiltonian given as
\begin{align}
    \tilde{H}(t) := 
    \begin{dcases}
    H^{(1)} &\text{for } 0 \leq t \leq \theta_{1,1} \\
    H^{(2)} &\text{for } \theta_{1,1} \leq t \leq \theta_{1,1}+\theta_{1,2} \\
    \cdots \\
    H^{(q)} &\text{for } \sum_{j=1}^{q-1}\theta_{1,j} \leq t \leq \sum_{j=1}^q \theta_{1,j} \\
    H^{(1)} &\text{for } \sum_{j=1}^q \theta_{1,j} \leq t \leq \sum_{j=1}^q \theta_{1,j}+\theta_{2,1} \\
    \cdots \\
    H^{(q)} &\text{for }  \sum_{i=1}^p\sum_{j=1}^{q-1} \theta_{i,j} \leq t \leq \sum_{i=1}^p \sum_{j=1}^q \theta_{i,j}
    \end{dcases}. \label{eq:hva_to_td_ham}
\end{align}

For convenience, we define $\Upsilon_{n,m} := \sum_{i=1}^{n-1}\sum_{j=1}^q \theta_{i,j} + \sum_{j=1}^m \theta_{n,j}$ which is the cumulative sum of $\{\theta\}$.
For any subcircuit of the HVA $U_b\cdots U_a$, where $a=(i,j)$ and $b=(i',j')$ are indices for the layers,
we consider $H(t) = H_0 + V(t)$ with $H_0 = 0$ and $V(t) = \tilde{H}(t + \Upsilon_{a-1})$ [Eq.~\eqref{eq:hva_to_td_ham}] defined for $0 \leq t \leq \Upsilon_{b} - \Upsilon_{a-1} := \tau$ (where we use $\Upsilon_{a-1}$ to denote the sum of the parameters before layer $a=(i,j)$ and $\Upsilon_b=\Upsilon_{i',j'}$ for $b=(i',j')$).

Parameters for the FM expansion can be obtained by writing $V(t)$ as
\begin{align}
    V(t) = \tilde{H}(t + \Upsilon_{a-1}) = \sum_{|X| \leq k} h_X(t).
\end{align}
Following the notation in the main text, we have 
\begin{align}
V_0 &= \frac{1}{\tau} \int_{0}^\tau \sum_{|X| \leq k} \Vert h_X(t) \Vert dt \leq \sup_{0 \leq t \leq \tau} \sum_{|X| \leq k} \Vert h_X(t) \Vert = \max_m \sum_{|X| \leq k} \Vert h_X^{(m)} \Vert = H_{\rm max}
\end{align}
for $V_0$ defined in Eq.~\eqref{eq:fm_expansion_params} and $H_{\rm max}$ defined in Eq.~\eqref{eq:h0_def}. 
Under this setup, applying Corollary~\ref{cor:truncated_fm} 
to $U_R$ and $U_L$ defined in the main text yields Proposition~\ref{prop:effective_ham_fm}.
Precisely, for a given $n \leq n_0 = \lfloor 1/(32kJ t_{R,L}) \rfloor$, there are $(n+1)k$-local Hamiltonians $H_R$ and $H_L$ such that 
\begin{align}
    &\Bigl\Vert U_R - e^{-i H_R t_R} \Bigr\Vert \leq 6 H_{\rm max}  2^{-\lfloor 1/(32kJ t_R) \rfloor} t_R + \frac{2 H_{\rm max} (2kJ)^{n+1}}{(n+2)^2}(n+1)! t_R^{n+2},\\
    &\Bigl\Vert U_L - e^{-i H_L t_L} \Bigr\Vert \leq 6 H_{\rm max}  2^{-\lfloor 1/(32kJ t_L) \rfloor} t_L + \frac{2 H_{\rm max} (2kJ)^{n+1}}{(n+2)^2}(n+1)! t_L^{n+2}
\end{align}
are satisfied (where we put $\lambda = 2kJ$ from Eq.~\eqref{eq:fm_expansion_params}).

Furthermore, $H_R$ and $H_L$ share any symmetries that $\{H^{(j)}\}$ have, which follows from the property of the commutator, i.e., $W[H_1,H_2]W^{-1} = W(H_1 H_2 - H_2 H_1) W^{-1} = (WH_1W^{-1}) (WH_2W^{-1}) - (WH_2W^{-1}) (WH_1W^{-1}) = [WH_1W^{-1},WH_2W^{-1}]$. Thus, for example, if all $\{H^{(j)}\}$ are translationally invariant, the resulting Hamiltonians $H_R$ and $H_L$ are also translationally invariant.

Obtaining the norm of each term of $H_R$ ($H_L$) is also possible.
For convenience, let $K$ be one of $H_R$ or $H_L$ defined by 
$K := H^{(n)}_F = \sum_{m=0}^n \Omega_m \tau^m$ for $\tau=t_R$ or $\tau = t_L$.
For each time $t$, let us define $j[t] \in \{1, \cdots, q\}$ to be the index such that $V(t) = H^{(j[t])}$.
Then $V(t_{\sigma(1)}) = H^{(j[t_{\sigma(1)}])} = \sum_X h^{(j[t_{\sigma(1)}])}_X$ where $h^{(j[t_{\sigma(1)}])}_X$ acts on at most $k$ sites.
Inserting this expression in Eq.~\eqref{eq:fm_omega} gives
\begin{align}
    \Omega_n &= \sum_X \frac{1}{(n+1)^2} \sum_{\sigma \in S_{n+1}} (-1)^{n-\omega(\sigma)} \frac{\omega(\sigma)!(n-\omega(\sigma))!}{n!} \nonumber \\
    &\qquad \times \frac{1}{i^n \tau^{n+1}} \int_0^{\tau} dt_{n+1}\cdots \int_0^{t_3} dt_{2} \int_0^{t_2} dt_{1} [H(t_{\sigma(n+1))},[H(t_{\sigma(n)}),\cdots,[H(t_{\sigma(2)}),h^{(j[t_{\sigma(1)}])}_X]\cdots ]].
\end{align}

So we write $K=\sum_X k_{\tilde{X}}$ with
\begin{align}
    &k_{\tilde{X}} = \sum_{m=0}^{n} \frac{\tau^m}{(m+1)^2} \sum_{\sigma \in S_{m+1}} (-1)^{m-\omega(\sigma)} \frac{\omega(\sigma)!(m-\omega(\sigma))!}{m!} \nonumber \\
    &\qquad \times \frac{1}{i^m \tau^{m+1}} \int_0^{\tau} dt_{m+1}\cdots \int_0^{t_3} dt_{2} \int_0^{t_2} dt_{1} [H(t_{\sigma(m+1))},[H(t_{\sigma(m)}),\cdots,[H(t_{\sigma(2)}),h^{(j[t_{\sigma(1)}])}_{X_i}]\cdots ]].
\end{align}

Locality of $k_{\tilde{X}}$ follows from the fact that the multicommutator $[H^{(i_n)},[H^{(i_{n-1})},\cdots,[H^{(1)},O]\cdots ]]$ acts on at most $(n+1)k$ nearby sites for any operator $O$ acting on at most $k$ local sites, and the Hamiltonians $H^{(1)},\cdots,H^{(n)}$ are $k$-local.
Precisely, each $k_{\tilde{X}}$ is supported by augmented sites $\tilde{X} = \{i \in \Lambda~|~\mathrm{dist}(i, X) \leq nk \}$ where $\mathrm{dist}(i, X) = \min_{j \in X} \mathrm{dist}(i,j)$.

Finally, we obtain a bound of the norm of $k_{\tilde{X}}$ using the inequality
\begin{align}
    &\biggl \Vert  \int_0^{\tau} dt_{m+1}\cdots \int_0^{t_3} dt_{2} \int_0^{t_2} dt_{1} [H(t_{\sigma(m+1))},[H(t_{\sigma(m)}),\cdots,[H(t_{\sigma(2)}),h^{(j[t_{\sigma(1)}])}_X]\cdots ]]\biggr\Vert \nonumber \\
    & \leq \frac{\tau^{n+1}}{(n+1)!} \max_{i_1,\cdots,i_n} \Vert [H^{(i_n)},\cdots,[H^{(i_2)},h^{(i_1)}_X]\cdots ]] \Vert,
\end{align}
and the following lemma.

\begin{lemma}[Consequence of Lemma~3 in Ref.~\cite{kuwahara2016floquet}]
Let $\{H^{(j)}\}$ be $k$-local and $\sum_{X:X \ni i} \Vert h^{(j)}_X \Vert \leq J$ for all $j$.
Then for an arbitrary operator $O$ supported on $k$ local sites, we have
\begin{align}
    \Vert [H^{(i_n)},[H^{(i_{n-1})},\cdots,[H^{(1)},O]\cdots ]] \Vert \leq (n!) (2kJ)^n \Vert O \Vert.
\end{align}
\end{lemma}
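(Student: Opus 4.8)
The plan is to prove the norm bound by first expanding the nested multicommutator into a sum over sequences of individual Hamiltonian terms and then estimating that sum combinatorially. Writing $H^{(i_m)} = \sum_{|X_m| \le k} h^{(i_m)}_{X_m}$ and distributing every commutator, I would obtain
\begin{align}
[H^{(i_n)},[\cdots,[H^{(i_1)},O]\cdots]] = \sum_{X_1,\dots,X_n} \bigl[h^{(i_n)}_{X_n},[\cdots,[h^{(i_1)}_{X_1},O]\cdots]\bigr],
\end{align}
where each $X_m$ ranges over subsets of $\Lambda$ of size at most $k$. The crucial observation is that a summand can be nonzero only if the sequence is \emph{connected}: defining the accumulated support $R_{m-1} := \mathrm{supp}(O) \cup X_1 \cup \cdots \cup X_{m-1}$, the inner operator $[\cdots,[h^{(i_1)}_{X_1},O]\cdots]$ built from the first $m-1$ commutators is supported inside $R_{m-1}$, so $[h^{(i_m)}_{X_m},\,\cdot\,]$ vanishes unless $X_m \cap R_{m-1} \neq \emptyset$.

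Next I would bound each surviving summand. Applying $\Vert [A,B]\Vert \le 2\Vert A\Vert\,\Vert B\Vert$ once per commutator gives $\Vert [h^{(i_n)}_{X_n},[\cdots,[h^{(i_1)}_{X_1},O]\cdots]]\Vert \le 2^n \bigl(\prod_{m=1}^n \Vert h^{(i_m)}_{X_m}\Vert\bigr)\Vert O\Vert$. The heart of the argument is the support estimate: along a \emph{fixed} sequence $(X_1,\dots,X_n)$ each $X_i$ contributes at most $k$ sites, and $\mathrm{supp}(O)$ has at most $k$ sites, so $|R_{m-1}| \le k + (m-1)k = mk$. This is exactly where expanding first pays off: the accumulated support of a committed term-sequence grows only linearly in $m$, whereas the support of the partially-commuted operator, which is a union over \emph{all} contributing terms, could blow up for non-geometric $k$-body Hamiltonians.

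I would then collect the estimates into a nested sum and peel off the summations from the outermost index inward. For fixed $X_1,\dots,X_{m-1}$, the interaction bound $\sum_{X \ni a}\Vert h^{(i_m)}_X\Vert \le J$ together with a union bound over $R_{m-1}$ gives
\begin{align}
\sum_{X_m:\,X_m \cap R_{m-1} \neq \emptyset} \Vert h^{(i_m)}_{X_m}\Vert \le \sum_{a \in R_{m-1}} \sum_{X_m \ni a} \Vert h^{(i_m)}_{X_m}\Vert \le |R_{m-1}|\,J \le mkJ.
\end{align}
Peeling the sums in the order $m=n,n-1,\dots,1$ then produces the product $\prod_{m=1}^n mkJ = n!\,(kJ)^n$, and combining with the overall factor $2^n$ and $\Vert O\Vert$ yields the claimed $n!\,(2kJ)^n\Vert O\Vert$. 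The main obstacle is precisely identifying this order of operations: one has to expand into term-sequences before estimating any supports, since tracking the support of each intermediate commutator directly fails to give linear growth; once the connectivity structure is exposed, the remaining bookkeeping is the routine outside-in peeling above.
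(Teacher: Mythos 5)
Your proof is correct and is essentially the standard argument behind the result the paper relies on: the paper itself gives no proof here, simply delegating to Lemma~3 of Ref.~\cite{kuwahara2016floquet}, and your reconstruction (expand the multicommutator into term sequences, keep only connected sequences via the accumulated support $R_{m-1}$ with $|R_{m-1}|\leq mk$, apply $\Vert[A,B]\Vert\leq 2\Vert A\Vert\Vert B\Vert$ once per level, and peel the sums outside-in using $\sum_{X\cap R\neq\emptyset}\Vert h_X\Vert\leq |R|J$) is exactly how that lemma is established there. The bookkeeping is sound, and the product $\prod_{m=1}^{n} mkJ=n!(kJ)^{n}$ combined with the factor $2^{n}$ reproduces the stated bound.
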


We thus have
\begin{align}
    \Vert k_{\tilde{X}} \Vert \leq \sum_{m=0}^n \frac{(2kJ)^m}{(m+1)^2}m! \tau^m, \label{eq:fm_term_bound}
\end{align}
where we use $|\sum_{\sigma \in S_{m+1}}| = (m+1)!$, $\theta(\sigma)!(m-\theta(\sigma))!/m! = {m \choose \theta(\sigma)}^{-1} \leq 1$, and $\Vert h^{(i_1)}_X \Vert = 1$ regardless of $j$ as $\{h^{(j)}_X\}$ are Pauli words.
As a consequence, we obtain
\begin{align*}
    \Vert K \Vert &\leq \sum_{X} \Vert k_{\tilde{X}} \Vert \leq H_{\rm max} \sum_{m=0}^n \frac{(2kJ)^m}{(m+1)^2}m! \tau^m, \numberthis \label{eq:upper_bound_K}
\end{align*}
where $H_{\rm max}$ is the maximum number of terms in $H^{(j)} = \sum_X h_X^{(j)}$ (defined in the main text).

\section{Proof of Theorem~\ref{thm:small_parameter_large_grad}}
\label{app:prove_small_parameter_large_grad}
We here provide a detailed proof showing that there exists $\tau_0 = \Theta(1/N)$ such that the HVA with $\sum_{i,j}\theta_{i,j} = t_R + t_L \leq \tau_0$ has large gradient components $\partial_{n,m}C$.
Here, we consider the cost function $C$ given by the expectation value of a local observable $O$ acting on at most $k_O$ sites and an initial state $\rho_0$ which gives $|\Tr\{\rho_0[H^{(m)},O]\}| = \Theta(1)$.

Our proof consists of three steps. First, we show that the error approximating the HVA to local Hamiltonian evolution from the FM expansion is $\mathcal{O}(1/N^2)$. Next, we derive all factors ($\Vert H_R \Vert$, $\Vert [H_L, O] \Vert $, etc.) in Proposition~\ref{prop:short_time_grad_var} from the FM expansion.
We then complete the proof by combining steps to show that there exists $\tau_0 = \Theta(1/N)$ such that $| \partial_{n,m}C |$ is lower bounded by a constant.

\subsection{Polynomially decaying bound of the error from the truncated FM expansion} 
\label{app:prove_small_parameter_large_grad_1}
Let us first analyze the error term in Proposition~\ref{prop:effective_ham_fm} for $t_R,t_L \leq c/N$ with $n=1$.
We note that $n=1$ requires $n_0 = \lfloor 1/(32kJt_{R,L}) \rfloor \geq 1$, which is satisfied for $N \geq N_0 := 32ckJ$.
In addition, we assume $kJ \geq 1$, which is true in our setting [see Eq.~\eqref{eq:def_J}].
Then, the error from the truncated FM expansion [the RHS of Eq.~\eqref{eq:fm_bound}] is given by
\begin{align}
    \epsilon &= \Bigl[ 6 c \times 2^{-\lfloor N/(32ckJ) \rfloor} + \frac{4 c^3 (2kJ)^2}{9}\frac{1}{N^2} \Bigr] \frac{H_{\rm max}}{N} \nonumber \\
    &\leq r \Bigl[ 6 c \times 2^{-\lfloor N/(32ckJ) \rfloor} + \frac{4 c^3 (2kJ)^2}{9}\frac{1}{N^2} \Bigr], \label{eq:fm_error}
\end{align}
where $r$ is a constant such that $ H_{\rm max} \leq r N$ (which is from $H_{\rm max} = \mathcal{O}(N)$).

We now use the following lemma to find $N_1$ such that the error is $\mathcal{O}(1/N^2)$ for $N \geq N_1$.
\begin{lemma}
For a given $\kappa_1,\kappa_2,\alpha > 0$ and 
\begin{align}
N_1:=\max \biggl\{ \frac{8}{\alpha}, -\frac{4}{\alpha} \log \Bigl[ \bigl( \frac{\alpha e}{8} \bigr)^2 \frac{\kappa_2}{2\kappa_1} \Bigr] \biggr\},
\end{align}
the inequality
\begin{align}
    \kappa_1 2^{-\lfloor \alpha N\rfloor} + \frac{\kappa_2}{N^2} \leq 2 \frac{\kappa_2}{N^2}
\end{align}
is satisfied for all $N \geq N_1$.
\end{lemma}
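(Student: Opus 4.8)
The plan is to absorb the $\kappa_2/N^2$ appearing on the left into the right-hand side, reducing the claim to the single inequality $\kappa_1 2^{-\lfloor \alpha N\rfloor} \le \kappa_2/N^2$; that is, to show that the exponentially small term is dominated by the polynomial $\kappa_2/N^2$ once $N \ge N_1$. Everything then collapses to a standard ``an exponential beats a polynomial'' estimate, and the only genuine work is to track the constants carefully so that they reproduce the stated closed form of $N_1$ exactly.

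First I would remove the floor: since $\lfloor \alpha N\rfloor > \alpha N - 1$, we have $2^{-\lfloor \alpha N\rfloor} \le 2^{1-\alpha N} = 2\cdot 2^{-\alpha N}$. Next I would pass from base $2$ to base $e$ using $\ln 2 > 1/2$, which gives $2^{-\alpha N} = e^{-\alpha N \ln 2} \le e^{-\alpha N/2}$; this is what turns the base-$2$ rate $\alpha$ into the effective rate $\alpha/2$. Combining the two estimates, it suffices to prove $2\kappa_1 N^2 e^{-\alpha N/2} \le \kappa_2$, equivalently, after taking logarithms,
\[
\frac{\alpha N}{2} - 2\ln N \ \ge\ \ln\frac{2\kappa_1}{\kappa_2}.
\]

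The key step is then to control $\ln N$ by a small linear term. I would invoke the tangent-line (Young) inequality $\ln N \le a N - \ln a - 1$, valid for every $N>0$ and every $a>0$, with the specific choice $a = \alpha/8$; the constant $-\ln(\alpha/8)$ it produces, together with the $-1$, is precisely what assembles into the factor $(\alpha e/8)^2$ seen in $N_1$, since $-2[\ln(\alpha/8)+1] = -\ln(\alpha e/8)^2$. Substituting $2\ln N \le \tfrac{\alpha}{4}N - 2\ln(\alpha/8) - 2$ leaves $\tfrac{\alpha N}{2} - 2\ln N \ge \tfrac{\alpha}{4}N + 2\ln(\alpha e/8)$, so the target inequality holds as soon as $\tfrac{\alpha}{4}N \ge -\ln[(\alpha e/8)^2 \kappa_2/(2\kappa_1)]$, i.e.\ exactly when $N$ exceeds the second entry of the $\max$ defining $N_1$. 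The floor $8/\alpha$ in the $\max$ simply guarantees that $N_1$ is a positive, non-vacuous threshold, covering the regime $\kappa_1 \ll \kappa_2$ in which the logarithmic term is non-positive and the bound already holds for all $N>0$.

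The main obstacle I anticipate is bookkeeping rather than anything conceptual: the qualitative claim is immediate, but reproducing the exact prefactor $-\tfrac{4}{\alpha}$ and the argument $(\alpha e/8)^2 \kappa_2/(2\kappa_1)$ forces the tangent-line slope to be chosen as $a=\alpha/8$, so that the leftover linear coefficient $\tfrac{\alpha}{4}$ stays positive while the constants fold up cleanly. I would carefully audit the two places where a factor of two is spent --- the floor bound $2^{-\lfloor\alpha N\rfloor}\le 2\cdot 2^{-\alpha N}$ and the base change $\ln 2 > 1/2$ --- since a slip in either would shift the resulting prefactor away from $4/\alpha$.
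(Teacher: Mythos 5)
Your proposal is correct and follows essentially the same route as the paper: remove the floor and change base to get $2\kappa_1 e^{-\alpha N/2}$, then control $\log N$ by the tangent line with slope $\alpha/8$, and solve for $N$; the constants assemble exactly as in the stated $N_1$. The only cosmetic difference is that you take logarithms where the paper works directly with $f(N)=2\kappa_1 N^2 e^{-\alpha N/2}$, and you correctly observe that the tangent-line bound holds for all $N>0$ (the paper only invokes it for $N\geq 8/\alpha$), which does not change the conclusion.
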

\begin{proof}
We first have
\begin{align}
    \kappa_1 2^{-\lfloor \alpha N\rfloor} - \frac{\kappa_2}{N^2} \leq 2 \kappa_1 e^{-\alpha N/2} - \frac{\kappa_2}{N^2} = \frac{2 \kappa_1 N^2 e^{-\alpha N/2} - \kappa_2}{N^2}.
\end{align}
Let us define $f(N):=2 \kappa_1 N^2 \exp[-\alpha N/2] = 2\kappa_1 \exp[-\alpha N/2 + 2 \log N]$.
For $N \geq 8/\alpha$, we have
\begin{align}
    \log N \leq \frac{\alpha}{8} N + \log \bigl[ \frac{8}{\alpha e} \bigr].
\end{align}
Thus,
\begin{align}
    f(N) \leq 2 \kappa_1 \bigl( \frac{8}{\alpha e}\bigr)^2 \exp \bigl[ -\frac{\alpha}{4}N \bigr]
\end{align}
for all $N \geq 8/\alpha$. One sees that the RHS is smaller than $\kappa_2$ when 
\begin{align}
N \geq -\frac{4}{\alpha} \log \Bigl[ \bigl( \frac{\alpha e}{8} \bigr)^2 \frac{\kappa_2}{2\kappa_1} \Bigr],
\end{align}
which completes the proof.
\end{proof}

We then apply this lemma to obtain the upper bound of Eq.~\eqref{eq:fm_error}.
Inserting $\alpha = 1/(32ckJ)$, $\kappa_1 = 6c$, and $\kappa_2 = 4c^3(2kJ)^2/9$ gives $N_1 = 128 \gamma ckJ$ where $\gamma = \log(4^7 \cdot 3^3 / e^2) \approx 11.00$.
As $N_1 \geq N_0$, we have $\epsilon \leq \beta(c)/N^2$ for all $N \geq \max \{N_0, N_1\} = 128 \gamma ckJ$ with $\beta(c) = 8c^3 r(2kJ)^2 /9$.
The obtained bounds tells us that the $U_L$ and $U_R$ in Proposition~\ref{prop:effective_ham_fm}
which appear in $\partial_{n,m}C$ approximate to local Hamiltonian evolution.
Precisely, for $U_R = e^{-i H^{(m-1)}\theta_{n,m-1}} \cdots e^{-i H^{(1)} \theta_{i,1}}$ and $U_L = e^{-i H^{(q)}\theta_{p,q}}\cdots e^{-i H^{(m)}\theta_{n,m}}$,
there are $2k$-local Hamiltonians $H_R$, $H_L$ such that 
$\Vert U_{R,L} - e^{-iH_{R,L} t_{R,L}} \Vert \leq \beta(c)/N^2$ if $t_R,t_L \leq c/N$ for $N \geq 128 \gamma ck J$ where $t_R = \theta_{1,1}+\cdots+\theta_{n,m-1}$ and $t_L = \theta_{n,m}+\cdots+\theta_{p,q}$.

\subsection{Condition of the constant for large gradients}
We next find an upper bound of $c$ (for $\tau_0 = c/N$) from the complete expression of time $t_c$ in Proposition~\ref{prop:short_time_grad_var}.
From Eq.~\eqref{eq:upper_bound_K} with the first order expansion ($n=1$), we have 
\begin{align}
    \Vert H_{R,L} \Vert \leq H_{\rm max} \Bigl( 1 + \frac{kJ}{2} t_{R,L} \Bigr).
\end{align}

Using this inequality, we find
\begin{align}
    \Vert H_{R} \Vert \leq H_{\rm max} \Bigl( 1 + \frac{kJ}{2}t_{R} \Bigr), \qquad \Vert [H_{L},O] \Vert \leq 2 l \Vert O \Vert \Bigl( 1 + \frac{kJ}{2}t_{L} \Bigr),
\end{align}
where we obtain the second inequality by combining Eq.~\ref{eq:local_ham_norms} and Eq.~\ref{eq:fm_term_bound}.
Here, $l=|\{X: [k_{\tilde{X}}, O] \neq 0\}| \geq 1$ is a constant for a given lattice, which follows from the fact that $k_{\tilde{X}}$ acts on at most $2k$ nearby sites and $O$ is a local operator. 

In addition, we have
\begin{align}
\Vert H^{(m)} \Vert \leq H_{\rm max}, \qquad \Vert [H^{(m)},O] \Vert \leq 2 s \Vert O \Vert
\end{align}
where $s=|\{X:[h_X, O] \neq 0\}| \geq 1$ is also a constant. We used the fact that each $h_X$ is a Pauli string to obtain the second inequality.

As $t_R,t_L \leq \tau_0 = c/N$ (from $t_R,t_L \geq 0$ and $t_R + t_L \leq \tau_0$), we obtain for $N \geq N_0 \geq 32c kJ$,
\begin{align}
    \Vert H_R \Vert \leq \frac{65}{64} H_{\rm max} := \mu H_{\rm max}, \qquad \Vert [H_{L},O] \Vert \leq 2 l \Vert O \Vert \times \frac{65}{64} := 2 \mu l \Vert O \Vert
\end{align}
where $\mu = 65/64$.

Using the fact that $\Vert H_{\rm max} \Vert \leq r N$, Proposition~\ref{prop:short_time_grad_var} yields
\begin{align}
    t_c \geq \frac{g}{8 \mu r \Vert O\Vert \max\{\mu l, s\}} \frac{1}{N},
\end{align}
where $g = |\Tr[\rho_0 [H^{(m)}, O]]|$ is the magnitude of the gradient when the circuit is trivial ($U_R=U_L=\mathbb{1}$).
Thus $t_R + t_L \leq t_c$ is satisfied for all $c$ such that
\begin{align}
    c \leq \frac{g}{8 \mu r \Vert O \Vert \max\{\mu l,s\}}.
\end{align}
We still note that the current condition implies large gradients only when $U_{R,L}$ are exact time-evolution operators. As there is an approximation error from the FM expansion, we consider this factor in the following subsection.

\subsection{Bounding gradient with the FM truncation error}
We introduce the following lemma to see how much an error from unitary approximation affects the gradients.
\begin{lemma}
For a density matrix $\rho \geq 0$ and $\Tr[\rho]=1$, Hermitian operators $A$, $\tilde{A}$, and unitary operators $U$, and $\tilde{U}$,
\begin{align*}
    \bigl| \Tr[U \rho U^\dagger A] - \Tr [\tilde{U} \rho \tilde{U}^\dagger \tilde{A}] \bigr| \leq \Vert A \Vert \Vert U - \tilde{U} \Vert + \Vert A - \tilde{A} \Vert + \Vert \tilde{A} \Vert \Vert U^\dagger - \tilde{U}^\dagger \Vert. \numberthis
\end{align*}
\begin{proof}
\begin{align*}
    \bigl| \Tr[U \rho U^\dagger A] - \Tr [\tilde{U} \rho \tilde{U}^\dagger \tilde{A}] \bigr| &= \bigl| \Tr[\rho (U^\dagger A U - \tilde{U}^\dagger \tilde{A}\tilde{U})] \bigr|\\
    &\leq \Vert U^\dagger A U - \tilde{U}^\dagger \tilde{A}\tilde{U} \Vert = \Vert U^\dagger A U - U^\dagger A \tilde{U} + U^\dagger A \tilde{U} - \tilde{U}^\dagger \tilde{A}\tilde{U} \Vert \\
    &\leq \Vert A \Vert \Vert U - \tilde{U} \Vert + \Vert U^\dagger A - \tilde{U}^\dagger \tilde{A} \Vert \\
    &\leq \Vert A \Vert \Vert U - \tilde{U} \Vert + \Vert A - \tilde{A} \Vert + \Vert \tilde{A} \Vert \Vert U^\dagger - \tilde{U}^\dagger \Vert.
\end{align*}
\end{proof}
\end{lemma}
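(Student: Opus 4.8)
The plan is to reduce the statement to a pure operator-norm inequality by first moving $\rho$ to the front of the trace via cyclicity, then using that $\rho$ is a density matrix to pass from a trace functional to the operator norm, and finally controlling the remaining operator difference by a telescoping triangle-inequality argument.

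First I would rewrite the two terms using cyclicity, $\Tr[U\rho U^\dagger A] = \Tr[\rho\, U^\dagger A U]$ and $\Tr[\tilde U \rho \tilde U^\dagger \tilde A] = \Tr[\rho\, \tilde U^\dagger \tilde A \tilde U]$, so that the quantity of interest becomes $\bigl|\Tr[\rho\,(U^\dagger A U - \tilde U^\dagger \tilde A \tilde U)]\bigr|$. Since $\rho \geq 0$ and $\Tr[\rho]=1$, H\"older's inequality (with the trace norm $\Vert\rho\Vert_1 = 1$) gives $|\Tr[\rho M]| \leq \Vert M \Vert$ for any operator $M$. Applying this with $M = U^\dagger A U - \tilde U^\dagger \tilde A \tilde U$ leaves me with the task of bounding a single operator norm.

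The core step is then an insert-and-subtract decomposition. I would write
\begin{align*}
U^\dagger A U - \tilde U^\dagger \tilde A \tilde U = U^\dagger A (U-\tilde U) + (U^\dagger A - \tilde U^\dagger \tilde A)\tilde U,
\end{align*}
and bound each summand using submultiplicativity of the operator norm together with unitary invariance $\Vert U \Vert = \Vert \tilde U \Vert = 1$: the first summand contributes at most $\Vert A \Vert\,\Vert U - \tilde U \Vert$, while the second reduces to $\Vert U^\dagger A - \tilde U^\dagger \tilde A \Vert$. A second insert-and-subtract on that residual, $U^\dagger A - \tilde U^\dagger \tilde A = U^\dagger (A - \tilde A) + (U^\dagger - \tilde U^\dagger)\tilde A$, yields the last two terms $\Vert A - \tilde A \Vert$ and $\Vert \tilde A \Vert\,\Vert U^\dagger - \tilde U^\dagger \Vert$. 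Summing the three pieces reproduces the claimed bound.

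I do not expect a genuine obstacle here; the proof is essentially bookkeeping. The one point requiring care is justifying $|\Tr[\rho M]| \leq \Vert M \Vert$, which is exactly where the hypotheses $\rho \geq 0$ and $\Tr[\rho]=1$ enter, and choosing the particular insert-and-subtract pattern so that each surviving factor is either a unitary (hence norm one) or one of the two prescribed differences. Note that neither positivity nor Hermiticity of $A$ and $\tilde A$ is actually used in the bound, although both are available in the setting where the lemma is applied.
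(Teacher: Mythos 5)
Your proposal is correct and follows essentially the same route as the paper: cyclicity of the trace, the bound $|\Tr[\rho M]| \leq \Vert M \Vert$ for a density matrix, and the identical two-stage insert-and-subtract telescoping (first in $U$, then in $A$ and $U^\dagger$). No substantive differences.
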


Let us now apply this lemma to $\partial_{n,m} C = \Tr\{U_R \rho_0 U_R^\dagger [H^{(m)}, U_L^\dagger O U_L]\}$ with $U = U_R$, $\tilde{U}=e^{-iH_R t_R}$, $A = [H^{(m)}, U_L^\dagger O U_L]$, and $\tilde{A}=[H^{(m)}, e^{i H_L t_L} O e^{-iH_L t_L} ]$.
Denoting $\epsilon$ by the error from the FM expansion, i.e.,
$\Vert e^{-i H_R t_R} - U_R \Vert \leq \epsilon$ and $\Vert e^{-i H_L t_L} - U_L \Vert \leq \epsilon$, we obtain
\begin{align*}
    &\bigl| \Tr\{U_R \rho_0 U_R^\dagger [H^{(m)}, U_L^\dagger O U_L]\} - \Tr\{e^{-i H_R t_R} \rho_0 e^{i H_R t_R} [H^{(m)}, e^{i H_L t_L} O e^{-i H_L t_L}]\} \bigr| \\
    &\leq \epsilon  \Vert [H^{(m)}, U_L^\dagger O U_L] \Vert + \Vert [H^{(m)}, U_L^\dagger O U_L] - [H^{(m)}, e^{i H_L t_L} O e^{-i H_L t_L}]\Vert + \epsilon \Vert [H^{(m)}, e^{i H_L t_L} O e^{-i H_L t_L}] \Vert \\
    &\leq 4 \epsilon \Vert H^{(m)} \Vert \Vert O \Vert + 2 \Vert H^{(m)} \Vert \Vert U_L^\dagger O U_L - e^{i H_L t_L} O e^{-i H_L t_L}\Vert \\
    &\leq 8\epsilon \Vert H^{(m)} \Vert \Vert O \Vert , \numberthis
\end{align*}
where we used $\Vert [A,B] \Vert \leq 2 \Vert A \Vert \Vert B \Vert$ to obtain the third line and $\Vert UOU^\dagger - \tilde{U}O \tilde{U}^\dagger \Vert \leq 2 \Vert O \Vert \Vert U - \tilde{U}\Vert$ for the last inequality.
As we obtained a bound $\epsilon \leq \beta(c)/N^2$ for $N \geq N_1(c)$ (final result in Sec.~\ref{app:prove_small_parameter_large_grad_1}) and $\Vert H^{(m)} \Vert \leq H_{\rm max} \leq rN$, the error is upper bounded by $8 r \beta(c)\Vert O \Vert/N$ for a sufficiently large $N$.
Thus, for $N \geq 32 r \beta(c) \Vert O \Vert /g$, we can bound the error to be less than $g/4$.
As Proposition~\ref{prop:short_time_grad_var} implies $|\Tr\{e^{-i H_R t_R} \rho_0 e^{i H_R t_R} [H^{(m)}, e^{i H_L t_L} O e^{-i H_L t_L}]\}| \geq g/2$, we have $|\Tr\{U_R \rho_0 U_R^\dagger [H^{(m)}, U_L^\dagger O U_L]\}| \geq g/4$ under this condition.

We summarize the overall result as follows.
For the HVA for $N$ qubits, a local operator $O$ and an initial state $\rho_0$ are given.
We assume there is a constant $g > 0$ and $m$ such that $|\Tr\{\rho_0 [H^{(m)}, O]\}| \geq g$ regardless of $N$.
Then we fix
\begin{align}
    c = \frac{g}{8 \mu r \Vert O \Vert \max\{\mu l,s\}},
\end{align}
where $r,l$, $s$ are constants obtained from the properties of $\{H^{(m)}\}$, and $\mu=65/64$.
Then for $N_{\rm min} = \max\{128\gamma ckJ, 32 r \beta(c) \Vert O\Vert/g \}$,
\begin{align}
    \Bigl| \frac{\partial C}{\partial \theta_{n,m}} \Bigr| \geq \frac{1}{4}g
\end{align}
is satisfied for all $N \geq N_{\rm min}$ if $t_{R} + t_{L} \leq c/N$.

\section{Vanishing gradient after a finite time evolution}
\label{app:finite_time_evolution_zero_grad}
In the main text and previous Appendix, we argued that there exists $\tau_0 = \Theta(1/N)$ such that the HVA with constraints $\theta_{i,j} \geq 0$ and $\sum_{i,j} \theta_{i,j} \leq \tau_0$ does not have vanishing gradients if $|\Tr[\rho_0 [H^{(m)}, O]]| \neq 0$ for some $m$.
In this subsection, we provide an example whose gradient component vanishes where the sum of parameters is a constant.
This implies that if there is $\tilde{\tau}_0$ such that the gradient is bounded by a constant when $\sum_{ij} \theta_{i,h} \leq \tilde{\tau}_0$, $\tilde{\tau}_0$ must be smaller than this constant.

We consider the Ising model with transverse and longitudinal fields whose Hamiltonian is given by $\mathcal{H} = - \sum_i Z_i Z_{i+1} - h \sum_i X_i - g\sum_i Z_i$.
The HVA for this model can be written as
\begin{align}
    \ket{\psi(\{\theta_{i,j}\})} = \prod_{i=p}^1 e^{-i \theta_{i,3}\sum_i Z_i} e^{-i \theta_{i,2}\sum_i X_i} e^{-i \theta_{i,1}\sum_i Z_i Z_{i+1}} \ket{+}^{N}
\end{align}
where the initial state $\rho_0 = \ket{\psi_0}\bra{\psi_0}$ with $\ket{\psi_0} = \ket{+}^{\otimes N}$.
Consider a local observable $O=Y_1$ and gradient for $\theta_{p,3}$ which is given by
\begin{align}
    \partial_{p,3}C = i\Tr\bigl\{ U_R \rho_0 U_R^\dagger [\sum_i Z_i, Y_1] \bigr\} = -2 \Tr \bigl\{ U_R \rho_0 U_R^\dagger X_1 \bigr\}.
\end{align}
As $|\Tr[\rho_0[\sum_i Z_i, Y_1]]| = 2$, Theorem~\ref{thm:small_parameter_large_grad} implies that there is $\tau_0 = \Theta(1/N)$ such that any parameters satisfying $\theta_{i,j} \geq 0$ and $\sum_{i,j} \theta_{i,j} \leq \tau_0$ give an $\Theta(1)$ gradient.
We now consider a parameter set with $\theta_{i,2}=\theta_{i,1}=0$ for all $i$. This gives $\ket{\psi(\{\theta_{i,j}\})} = U_R \ket{\psi_0} = (\cos \Upsilon \ket{+} - i \sin \Upsilon \ket{-})^{\otimes N}$ where $\Upsilon := \sum_{i,j} \theta_{i,j} = \sum_{i}\theta_{i,3}$.
Thus for $\Upsilon = \pi/4$, we obtain $\partial_{p,3}C = -2 \bra{y;+}^{\otimes N} X_1 \ket{y;+}^{\otimes N} = 0$.
This implies that we do not expect that the condition of the theorem is relaxed to $\tau_0 \geq \pi/4 = \Theta(1)$.

\end{document}